\newtheorem{theorem}{Theorem}[section]
\newtheorem{definition}[theorem]{Definition}
\newtheorem{lemma}[theorem]{Lemma}
\newtheorem{remark}[theorem]{Remark}
\newcommand{\qedsymb}{\hfill{\rule{2mm}{2mm}}}
\renewenvironment{proof}[1][]{\begin{trivlist}
\item[\hspace{\labelsep}{\bf\noindent Proof#1:\/}] }{\qedsymb\end{trivlist}}
\def\calR{{\cal R}}
\def\calF{{\cal F}}
\def\calJ{{\cal J}}
\def\calE{{\cal E}}
\def\calD{{\cal D}}
\def\calM{{\cal M}}
\def\calP{{\cal P}}
\def\calH{{\cal H}}
\newcommand\Prob[2]{{\Pr_{#1}\left[ {#2} \right]}}
\newcommand{\eps}{\epsilon}
\renewcommand{\epsilon}{\varepsilon}
\newcommand{\poly}{\mathop{\mathrm{poly}}}
\newcommand{\Inter}{\textsc{Intersecting}}
\newcommand{\Sampler}{\textsc{Canonical Tester}}
\begin{document}

\title{{\bf Testing Intersectingness of Uniform Families}}

\author{
Ishay Haviv\thanks{School of Computer Science, The Academic College of Tel Aviv-Yaffo, Tel Aviv 61083, Israel. Research supported in part by the Israel Science Foundation (grant No.~1218/20).}
\and
Michal Parnas\thanks{School of Computer Science, The Academic College of Tel Aviv-Yaffo, Tel Aviv 61083, Israel.}
}

\date{}

\maketitle

\begin{abstract}
A set family $\calF$ is called intersecting if every two members of $\calF$ intersect, and it is called uniform if all members of $\calF$ share a common size.
A uniform family $\calF \subseteq \binom{[n]}{k}$ of $k$-subsets of $[n]$ is $\eps$-far from intersecting if one has to remove more than $\eps \cdot \binom{n}{k}$ of the sets of $\calF$ to make it intersecting.
We study the property testing problem that given query access to a uniform family $\calF \subseteq \binom{[n]}{k}$, asks to distinguish between the case that $\calF$ is intersecting and the case that it is $\eps$-far from intersecting.
We prove that for every fixed integer $r$, the problem admits a non-adaptive two-sided error tester with query complexity $O(\frac{\ln n}{\eps})$ for $\eps \geq \Omega( (\frac{k}{n})^r)$ and a non-adaptive one-sided error tester with query complexity $O(\frac{\ln k}{\eps})$ for $\eps \geq \Omega( (\frac{k^2}{n})^r)$.
The query complexities are optimal up to the logarithmic terms.
For $\eps \geq \Omega( (\frac{k^2}{n})^2)$, we further provide a non-adaptive one-sided error tester with optimal query complexity of $O(\frac{1}{\eps})$.
Our findings show that the query complexity of the problem behaves differently from that of testing intersectingness of non-uniform families, studied recently by Chen, De, Li, Nadimpalli, and Servedio (ITCS,~2024).
\end{abstract}

\section{Introduction}

A set family $\calF$ is called intersecting if for every two sets $F_1, F_2 \in \calF$, it holds that $F_1 \cap F_2 \neq \emptyset$.
The study of intersecting families plays a central role in the area of extremal combinatorics with a particular attention dedicated to the uniform case, where all the sets of the family share a common size.
One of the most influential results in this context is the Erd{\H{o}}s--Ko--Rado theorem~\cite{EKR61}, proved in 1938 and published in 1961, which states that for integers $n$ and $k$ with $n \geq 2k$, the maximum size of an intersecting family of $k$-subsets of $[n] = \{1,2,\ldots,n\}$ is $\binom{n-1}{k-1}$, attained by the families of all $k$-subsets that include a fixed element.
Another prominent result, proved by Lov{\'{a}}sz~\cite{LovaszKneser} in 1978 settling a conjecture of Kneser~\cite{Kneser55} from 1955, asserts that for $n \geq 2k$, the family $\binom{[n]}{k}$ of all $k$-subsets of $[n]$ cannot be covered by fewer than $n-2k+2$ intersecting families.
This result is tight, as follows by considering, for each $i \in [n-2k+1]$, the family of $k$-subsets of $[n]$ that include $i$, and the family of $k$-subsets of $[n] \setminus [n-2k+1]$.
A more recent result, proved by Dinur and Friedgut~\cite{DinurF09} in 2009, provides a structural characterization for large intersecting families of $k$-subsets of $[n]$ when $k$ is sufficiently smaller than $n$. It says, roughly speaking, that every such family is approximately contained in an intersecting junta, that is, an intersecting family $\calJ$ over $[n]$, such that the membership of a set $F$ in $\calJ$ depends only on $F \cap J$ for a fixed set $J \subseteq [n]$, where the size of $J$ is determined by the precision of the containment (see Theorem~\ref{thm:DF}).

In this paper, we investigate intersecting uniform families from the computational perspective of property testing.
This field delves into the amount of data required for distinguishing objects that satisfy a prescribed property from those that significantly deviate from satisfying it.
The objective is to design a randomized algorithm for this task, called a (two-sided error) tester, that succeeds with high constant probability and minimizes the query complexity, i.e., the number of queries to the input object.
If the tester accepts objects that satisfy the given property with probability $1$, we say that its error is one-sided.
The tester is said to be non-adaptive if its queries are determined independently of the answers provided for prior queries.
For a thorough introduction to the field of property testing, the reader is referred to, e.g.,~\cite{Goldreich17}.

For integers $n$ and $k$ with $n \geq 2k$ and for a real $\eps \in [0,1)$, we say that a family $\calF \subseteq \binom{[n]}{k}$ is $\eps$-far from intersecting if one has to remove more than $\eps \cdot \binom{n}{k}$ of its sets to make it intersecting. In the property testing problem $\Inter_{n,k,\eps}$, we are given access to a family $\calF \subseteq \binom{[n]}{k}$, represented by an indicator function $f: \binom{[n]}{k} \rightarrow \{0,1\}$, and the goal is to distinguish between the case that $\calF$ is intersecting and the case that it is $\eps$-far from intersecting.
Note that the Erd{\H{o}}s--Ko--Rado theorem~\cite{EKR61} implies that every intersecting family $\calF \subseteq \binom{[n]}{k}$ includes at most a $k/n$ fraction of the sets of $\binom{[n]}{k}$. This gives impetus to studying the $\Inter_{n,k,\eps}$ problem for a proximity parameter $\eps = \eps(n,k)$ that diminishes faster than the ratio $k/n$ (see the discussion at the end of Section~\ref{sec:two-sided}).

Our interest in testing intersectingness of uniform families is sparked and inspired by a recent paper of Chen, De, Li, Nadimpalli, and Servedio~\cite{CDLNS24}, who introduced and explored the analogue problem for general (non-uniform) families of sets.
In their setting, the input consists of an indicator function $f: \{0,1\}^n \rightarrow \{0,1\}$ of a family $\calF$ of subsets of $[n]$ (of any size), identified with their characteristic vectors in $\{0,1\}^n$, and the goal is to decide whether $\calF$ is intersecting or $\eps$-far from intersecting. Here, since the size of the domain is $2^n$, a family is said to be $\eps$-far from intersecting if more than $\eps \cdot 2^n$ of its sets should be removed to make it intersecting.
Chen et al.~\cite{CDLNS24} proved that this problem admits a non-adaptive one-sided error tester with query complexity $\poly(n^{\sqrt{n \cdot \log(1/\eps)}},\frac{1}{\eps})$.
They further established a nearly matching lower bound, showing that the query complexity of every non-adaptive one-sided error tester for the problem is $2^{\Omega(\sqrt{n \cdot \log (1/\eps)})}$, whenever $\eps \in [2^{-n},\eps_0]$ for some constant $\eps_0>0$.
For non-adaptive two-sided error testers, they obtained a lower bound of $2^{\Omega(n^{1/4}/\sqrt{\eps})}$ on the query complexity, assuming that $\eps \in [1/\sqrt{n},\eps_0]$ for some constant $\eps_0>0$.
As will be shortly described, the results of the present paper highlight a significant difference between the behavior of the query complexity of testing intersectingness in the uniform and non-uniform settings.

\subsection{Our Contribution}

This paper studies the query complexity of the $\Inter_{n,k,\eps}$ testing problem.
We offer nearly matching upper and lower bounds for various settings of the integers $n$ and $k$ and the proximity parameter $\eps = \eps(n,k)$.
Let us mention already here that all of our upper bounds are proved via efficient testers, whose running time is polynomial in $n$.
We further note that the results presented below are applicable for all integers $n$ and $k$ for which the conditions on $\eps$ allow it to be smaller than $1$.
The precise and formal statements are given in the upcoming technical sections.
A summary of our main upper bounds is given in Table~\ref{tab:mytable}.

Our first result furnishes a non-adaptive two-sided error tester for the case of $\eps \geq \Omega( (k/n)^r)$ for a fixed constant $r$.
Its analysis borrows the structural characterization of large intersecting uniform families due to Dinur and Friedgut~\cite{DinurF09}.
Note that the multiplicative factors hidden by the $O(\cdot)$ and $\Omega(\cdot)$ notations might depend on $r$.

\begin{theorem}\label{thm:1Intro}(Two-Sided Error Tester)
For every fixed integer $r$, for all integers $n$ and $k$ with $n \geq 2 k$ and for any real $\eps \geq \Omega( (\frac{k}{n})^r)$, there exists a non-adaptive two-sided error tester for $\Inter_{n,k,\eps}$ with $O(\frac{\ln n}{\eps})$ queries.
\end{theorem}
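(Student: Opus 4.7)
The plan is to reduce testing intersectingness to approximate containment in one of polynomially many intersecting juntas, using the Dinur--Friedgut structure theorem (Theorem~\ref{thm:DF}) for the completeness analysis and a standard Chernoff-plus-union-bound argument for soundness. Concretely, using the hypothesis $\eps \geq \Omega((k/n)^{r})$, I would first fix a constant $j = j(r)$ such that the quantitative form of Theorem~\ref{thm:DF} guarantees the following: whenever $\calF \subseteq \binom{[n]}{k}$ is intersecting, there exists an intersecting junta $\calJ$ depending on at most $j$ coordinates with $|\calF \setminus \calJ| \leq \tfrac{\eps}{4}\binom{n}{k}$. The role of the exponent $r$ is precisely to absorb the approximation error supplied by Dinur--Friedgut into $\tfrac{\eps}{4}\binom{n}{k}$ while keeping $j$ independent of $n$ and $k$.

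The tester then draws $m = O((\ln n)/\eps)$ sets $F_{1}, \ldots, F_{m}$ independently and uniformly from $\binom{[n]}{k}$ and queries $f(F_{i})$ for each $i$; these are its only queries, which already meets the claimed non-adaptive query complexity. It enumerates all pairs $(J,\calJ')$ with $J \subseteq [n]$ of size at most $j$ and $\calJ' \subseteq 2^{J}$ for which $\calJ = \{F \in \binom{[n]}{k} : F \cap J \in \calJ'\}$ is an intersecting family; there are at most $n^{j} \cdot 2^{2^{j}} = \poly(n)$ such candidates since $j$ is a fixed constant, and for each one membership $F_{i} \in \calJ$ is determined by $F_{i} \cap J$ alone, so no further queries to $f$ are required. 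For each candidate $\calJ$, the tester computes the empirical fraction $\hat{p}_{\calJ}$ of samples $F_{i}$ with $f(F_{i}) = 1$ and $F_{i} \notin \calJ$, and accepts iff $\hat{p}_{\calJ} \leq \tfrac{\eps}{2}$ for some $\calJ$.

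For completeness, the Dinur--Friedgut junta $\calJ$ from the first step has true density $\Pr_{F}[F \in \calF \setminus \calJ] \leq \eps/4$, so a Chernoff bound yields $\hat{p}_{\calJ} \leq \eps/2$ with high probability. For soundness, if $\calF$ is $\eps$-far from intersecting then for every intersecting family $\calJ'$ the set $\calF \cap \calJ'$ is itself intersecting, so removing it from $\calF$ must leave more than $\eps\binom{n}{k}$ sets; hence $\Pr_{F}[F \in \calF \setminus \calJ] > \eps$ for every candidate intersecting junta $\calJ$, and a Chernoff bound together with a union bound over the $\poly(n)$ candidates shows that every $\hat{p}_{\calJ}$ simultaneously exceeds $\eps/2$ with high probability. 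This gives a two-sided error tester with the stated query complexity.

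The main obstacle I anticipate is the quantitative calibration in the first step: choosing $j = j(r)$ so that Theorem~\ref{thm:DF} yields an error of at most $(\eps/4)\binom{n}{k}$ uniformly in $n$ and $k$. This is exactly where the hypothesis $\eps \geq \Omega((k/n)^{r})$ is consumed, since the Dinur--Friedgut bound is naturally phrased in terms of $\binom{n-1}{k-1} = (k/n)\binom{n}{k}$, and additional factors of $k/n$ must be spent to turn the requisite junta size into a function depending only on $r$. Once this parameter is pinned down, the remainder is a textbook sampling argument, with the $\ln n$ factor in the query complexity coming directly from the union bound over the $\poly(n)$ candidate juntas.
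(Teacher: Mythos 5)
Your proposal is correct and follows essentially the same route as the paper: draw $O((\ln n)/\eps)$ uniform samples, use them to estimate $|\calF \setminus \calJ|/\binom{n}{k}$ for every intersecting $j$-junta $\calJ$ with $j=j(r)$ from Dinur--Friedgut, accept iff some estimate is small, and analyze via Chernoff plus a union bound over the $\poly(n)$ candidate juntas. The paper additionally checks a minor edge case ($n=2k$ or $k\le j$, where Theorem~\ref{thm:DF} does not directly apply) and phrases the result as a tolerant tester, but these do not change the substance of the argument.
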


In fact, we prove Theorem~\ref{thm:1Intro} in a stronger form, adopting the concept of tolerant property testing, introduced by Parnas, Ron, and Rubinfeld~\cite{ParnasRR06}.
Strengthening the standard notion of property testing, a tolerant tester is required to accept not only objects that satisfy the given property, but also those that are close to satisfying it (and as usual, to reject objects that significantly deviate from the property).
Accordingly, the tester given in Theorem~\ref{thm:1Intro} is shown to accept with high probability any family that can be made intersecting by removing relatively few of its sets (see Theorem~\ref{thm:two-sided}). Note that a result of Tell~\cite{Tell20} implies that a two-sided error is essentially inherent in tolerant property testing.

We next turn our attention to designing one-sided error testers for the $\Inter_{n,k,\eps}$ problem, wherein an intersecting family must be accepted with probability $1$. A natural non-adaptive tester for this purpose, termed the canonical tester, selects $m$ random sets from $\binom{[n]}{k}$, uniformly and independently, and checks whether they include two sets that demonstrate the non-intersectingness of the input family, namely, two disjoint sets within the family. This raises the combinatorial question, which might be of independent interest, of determining the smallest number $m=m(n,k,\eps)$ of random sets from $\binom{[n]}{k}$ that guarantee with high probability a pair of disjoint sets that lie in a family $\calF \subseteq \binom{[n]}{k}$, assuming that $\calF$ is $\eps$-far from intersecting. As our main technical contribution, we address this question for the case where $\eps \geq \Omega( (k^2/n)^r)$ for a fixed constant $r$. Our analysis yields the following result.

\begin{theorem}\label{thm:2Intro}(One-Sided Error Tester)
For every fixed integer $r$, for all integers $n$ and $k$ with $n \geq 2 k$ and for any real $\eps \geq \Omega( (\frac{k^2}{n})^r)$, there exists a non-adaptive one-sided error tester for $\Inter_{n,k,\eps}$ with $O(\frac{\ln k}{\eps})$ queries.
\end{theorem}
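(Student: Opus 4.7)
The plan is to analyze the \Sampler: sample $m = C_r \cdot \ln k / \eps$ sets from $\binom{[n]}{k}$ independently and uniformly, query each for membership in $\calF$, and reject iff two disjoint sampled sets both lie in $\calF$. The one-sided guarantee is immediate, since an intersecting $\calF$ contains no disjoint pair and the tester accepts with probability~$1$. The analysis therefore reduces to showing that whenever $\calF$ is $\eps$-far from intersecting with $\eps \geq \Omega((k^2/n)^r)$, the probability that $m$ uniform samples capture a disjoint pair from $\calF$ is at least $2/3$.

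I would reformulate this combinatorially. Letting $D$ denote the number of ordered disjoint pairs in $\calF$, the expected number of disjoint pairs captured by the sample is approximately $m^2 \cdot D / \binom{n}{k}^2$, and a second-moment (Paley--Zygmund) calculation delivers constant success probability once $D \gtrsim \eps^2 \binom{n}{k}^2 / \ln^2 k$, provided the disjointness degrees in $\calF$ are roughly regular. Accordingly, the core technical lemma to prove is: under the stated conditions, the disjointness graph restricted to $\calF$ has at least $\Omega\bigl(\eps^2 \binom{n}{k}^2 / \ln^2 k\bigr)$ edges, with a reasonably uniform degree sequence.

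To establish this lemma I would proceed by induction on $r$, using the following structural dichotomy in the inductive step. If $|\calF|$ is comparable to the Erd{\H{o}}s--Ko--Rado bound $\binom{n-1}{k-1}$, then by the Dinur and Friedgut theorem (Theorem~\ref{thm:DF} in the paper), any maximal intersecting subfamily of $\calF$ is approximately contained in some intersecting junta $\calJ$; the more than $\eps\binom{n}{k}$ sets of $\calF$ that escape $\calJ$ can then be paired against the bulk of $\calF$ to produce the required quadratic number of disjoint pairs. Otherwise $|\calF|$ is substantially smaller, and restricting to a random smaller ground set preserves the $\eps$-far property while effectively boosting the density of $\calF$ by a factor of $\Omega(n/k^2)$, allowing the inductive hypothesis at parameter $r-1$ to be invoked. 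The loss factor $k^2/n$ per inductive level is precisely what forces the assumption $\eps \geq \Omega((k^2/n)^r)$.

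The main obstacle I anticipate is the first branch of the dichotomy: upgrading the qualitative "approximately contained in an intersecting junta" statement to a quantitative quadratic lower bound on the number of disjoint pairs. A naive matching / vertex-cover argument yields only one disjoint partner per escaping set and hence the much weaker bound $D \gtrsim \eps\binom{n}{k}$, missing the target by a factor of $\binom{n}{k}$. Closing this gap likely requires a double-counting argument over random junta refinements, combined with a direct analysis of the degree sequence in the Kneser graph on $\binom{[n]\setminus J}{k}$ for the junta's coordinate set $J$, to show that typical non-junta sets in $\calF$ have many disjoint partners inside $\calF$ rather than just one; the same analysis should deliver the degree regularity needed to make Paley--Zygmund go through in the final step.
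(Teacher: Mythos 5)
Your high-level plan (run the \Sampler{} and bound the probability of a witness) matches the paper, but your analysis strategy is genuinely different from the one used in the paper, and it has a real gap at its core.

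The paper does not go through a second-moment argument at all, and in particular never needs a quadratic lower bound on the number of disjoint pairs in $\calF$. Instead, it introduces the notion of a set \emph{$\eps$-capturing} a family (Definition~\ref{def:rebellious}), and proves two structural lemmas: Lemma~\ref{lemma:B12} shows that a pair of families far from cross-intersecting can be restricted to disjoint intersections with any fixed small set while remaining far from cross-intersecting, and Lemma~\ref{lemma:ABCinduction} iterates this to produce sets $A,B,C$ with $B\cap C=\emptyset$ such that $(\calF(A_{\downarrow B}),\calF(A_{\downarrow C}))$ is still far from cross-intersecting \emph{and} no subset of $[n]\setminus A$ of size at most $r-1$ captures $\calF(A_{\downarrow C})$. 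Lemma~\ref{lemma:samples} then runs a \emph{sequential} sampling argument: use $O(\ln k/\eps)$ samples to collect, step by step, a family $\calR$ of at most $k^{r-1}$ sets of $\calF(A_{\downarrow C})$ whose joint ``shadow'' forces any set of $\calF(A_{\downarrow B})$ that meets all of them to contain one of at most $k^r$ specific $r$-subsets; this bounds the number of sets of $\calF(A_{\downarrow B})$ hitting all of $\calR$ by $k^r\binom{n-r}{k-r}\le (k^2/n)^r\binom{n}{k}$, so a constant fraction of $\calF(A_{\downarrow B})$ is disjoint from some set in $\calR$, and $O(1/\eps)$ more samples finish the job. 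The Dinur--Friedgut junta theorem is used only for the two-sided tester of Theorem~\ref{thm:1Intro}, not here.

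Your approach has two concrete problems. First, your target lemma (number of ordered disjoint pairs $D\gtrsim \eps^2\binom{n}{k}^2/\ln^2 k$) is not established by your dichotomy. In the Dinur--Friedgut branch, as you yourself observe, the maximality argument pairs each escaping set with \emph{one} disjoint partner, giving only $D\gtrsim\eps\binom{n}{k}$; the proposed fix via ``double-counting over random junta refinements'' and ``degree analysis in the Kneser graph on $\binom{[n]\setminus J}{k}$'' is not a proof, and it is precisely here that the difficulty lies --- there is no obvious reason an escaping set must be disjoint from many sets of $\calF$ rather than just the few that witness its escape. Second, even granting the count on $D$, Paley--Zygmund requires bounding $\sum_{A\in\calF}\deg(A)^2$; if the disjointness degree sequence is skewed (e.g., a small number of sets account for most of $D$), the second moment is dominated by ``cherries'' and the bound degrades to $P(X>0)\gtrsim m/\binom{n}{k}$, which is useless. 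You flag the need for ``degree regularity'' but do not establish it, and it is exactly the kind of structural property that fails in the hard cases; the paper's capture-based decomposition is what substitutes for it. The third branch of your induction (``restricting to a random smaller ground set ... boosts the density by $\Omega(n/k^2)$'') also needs a precise statement --- it is not immediate that the $\eps$-far property is preserved under restriction to a random subuniverse, nor that the proximity parameter scales the way you suggest.

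In short, the blueprint reduces the theorem to a quadratic pair-counting lemma plus a regularity hypothesis, neither of which you prove, and the paper sidesteps both by a sequential witness-building argument that only ever needs linear counts at each of $r$ stages.
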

\noindent
Let us emphasize that the tester provided by Theorem~\ref{thm:2Intro} surpasses that of Theorem~\ref{thm:1Intro} in two respects: its error is one-sided, and its query complexity is lower, replacing the $\ln n$ term by $\ln k$.
On the other hand, Theorem~\ref{thm:2Intro} requires $\eps$ to satisfy $\eps \geq \Omega( (k^2/n)^r)$, and is thus applicable only for $n \geq \Omega(k^2)$, whereas the two-sided error tester of Theorem~\ref{thm:1Intro} is applicable already for $n \geq \Omega(k)$.

For the special case of $r=2$, we offer a notably simple analysis of the canonical tester, enabling us to enhance the query complexity achieved in Theorem~\ref{thm:2Intro} by getting rid of the logarithmic term.
This gives the following result.

\begin{theorem}\label{thm:3Intro}(One-Sided Error Tester for $r=2$)
For all integers $n$ and $k$ with $n \geq 2 k$ and for any real $\eps \geq \Omega( (\frac{k^2}{n})^2)$, there exists a non-adaptive one-sided error tester for $\Inter_{n,k,\eps}$ with $O(\frac{1}{\eps})$ queries.
\end{theorem}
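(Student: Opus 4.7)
My plan is to analyze the canonical tester: draw $m = \Theta(1/\eps)$ sets $F_1, \ldots, F_m \in \binom{[n]}{k}$ uniformly and independently, query each for membership in $\calF$, and reject iff some two of the sampled sets both lie in $\calF$ and are disjoint. The one-sided error property is immediate, since an intersecting $\calF$ contains no disjoint pair, and so the tester will never reject it.

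The heart of the analysis is a lower bound on the number of ordered disjoint pairs $D := |\{(F, F') \in \calF \times \calF : F \cap F' = \emptyset\}|$, obtained by combining two observations. First, any family $\calF$ that is $\eps$-far from intersecting satisfies $|\calF| > \eps \binom{n}{k}$. Second, for $n \geq 2k$, any fixed $k$-subset of $[n]$ meets at most $\binom{n}{k} - \binom{n-k}{k} \leq O(k^2/n) \cdot \binom{n}{k}$ other $k$-subsets. Hence the number of ordered intersecting pairs in $\calF \times \calF$ is at most $|\calF| \cdot O(k^2/n) \binom{n}{k}$, giving
\[
D \;\geq\; |\calF| \cdot \Bigl(|\calF| - O(k^2/n) \cdot \binom{n}{k}\Bigr).
\]
Under the hypothesis $\eps \geq C \cdot (k^2/n)^2$ for a sufficiently large constant $C$, this, together with $|\calF| > \eps \binom{n}{k}$, is meant to yield $D \geq \Omega(\eps^2 \binom{n}{k}^2)$. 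The delicate sub-regime where $|\calF|$ is close to $\eps\binom{n}{k}$ and the above density inequality is loose is handled by invoking the matching of size at least $(\eps/2) \binom{n}{k}$ in the disjointness graph of $\calF$ that $\eps$-farness guarantees (since in any graph, the maximum matching is at least half the minimum vertex cover), and then lower bounding the disjoint pairs among the matched sets using the same local degree bound.

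With $D \geq \Omega(\eps^2 \binom{n}{k}^2)$ in hand, let $X$ count the pairs $(i,j)$ of sample indices with $F_i, F_j \in \calF$ and $F_i \cap F_j = \emptyset$. Linearity gives $\mathbb{E}[X] = \binom{m}{2} \cdot D/\binom{n}{k}^2 = \Omega(m^2 \eps^2)$, which is $\Omega(1)$ for $m = \Theta(1/\eps)$. A standard second-moment computation---expanding $\mathbb{E}[X^2]$ according to the number of shared sample indices between two witnessing pairs, and bounding the single-shared-index contribution via $\sum_{F \in \calF} d_\calF(F)^2 \leq |\calF| \cdot \binom{n-k}{k}^2$, where $d_\calF(F) := |\{F' \in \calF : F \cap F' = \emptyset\}|$---yields $\Var[X] = O(\mathbb{E}[X]^2)$, after which Paley--Zygmund gives $\Pr[X > 0] \geq 2/3$ by taking $m$ with a sufficiently large constant. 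The main anticipated obstacle is a clean, single-case derivation of $D \geq \Omega(\eps^2 \binom{n}{k}^2)$ in the regime where $\eps$ is near its lower boundary $(k^2/n)^2$ and the density $|\calF|/\binom{n}{k}$ is close to $\eps$, where the simple density argument is not by itself sharp and must be combined with the matching structure implied by $\eps$-farness.
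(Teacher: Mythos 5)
There is a genuine gap, and it sits exactly where you flagged uncertainty. Both your density argument and your proposed matching fallback rely on the single-set degree bound: a fixed $k$-subset intersects at most $k\binom{n-1}{k-1} = \frac{k^2}{n}\binom{n}{k}$ other $k$-subsets. This bound is too weak by a factor of $n/k^2$ to reach the threshold $\eps \geq \Omega((k^2/n)^2)$. The density inequality $D \geq |\calF|\bigl(|\calF| - \frac{k^2}{n}\binom{n}{k}\bigr)$ is vacuous unless $|\calF| \gtrsim \frac{k^2}{n}\binom{n}{k}$, but $\eps$-farness only guarantees $|\calF| > \eps\binom{n}{k} \geq \Omega((k^2/n)^2)\binom{n}{k}$, which can be much smaller. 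Restricting to the $\geq \eps\binom{n}{k}$ matched sets and reapplying the same degree bound faces the identical obstruction: you still need the number of matched sets to exceed $\frac{k^2}{n}\binom{n}{k}$, i.e.\ $\eps \gtrsim k^2/n$, not $\eps \gtrsim (k^2/n)^2$. So your argument as stated proves the theorem only for $\eps \geq \Omega(k^2/n)$, a strictly weaker range.

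The missing ingredient is to apply the degree bound to the \emph{pair}, not to a single set. For a disjoint pair $(C,D)$ coming from the matching, a $k$-subset that intersects \emph{both} $C$ and $D$ must contain some element of $C$ and a distinct element of $D$, so there are at most $k^2\binom{n-2}{k-2} \leq (k^2/n)^2\binom{n}{k}$ such sets. Combining this with $|\calF| > \eps\binom{n}{k} \geq 2(k^2/n)^2\binom{n}{k}$ shows that at least one of $C,D$ is disjoint from $\geq \frac{1}{2}\bigl(|\calF| - k^2\binom{n-2}{k-2}\bigr) \geq \frac{\eps}{4}\binom{n}{k}$ sets of $\calF$. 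Running this over all $\geq \frac{\eps}{2}\binom{n}{k}$ pairs of the matching yields $\geq \frac{\eps}{2}\binom{n}{k}$ ``useful'' sets, each disjoint from $\geq \frac{\eps}{4}\binom{n}{k}$ members of $\calF$, and in particular $D \geq \frac{\eps^2}{8}\binom{n}{k}^2$. The quadratic gain, from $(k^2/n)$ to $(k^2/n)^2$, comes precisely from counting sets that must hit two disjoint sets rather than one; this is the paper's Lemma~\ref{lemma:useful}, and it is what you are missing.

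A secondary remark: once you have the useful-sets statement, the second-moment / Paley--Zygmund route is unnecessary and somewhat perilous. Your variance bound $\sum_{F\in\calF} d_\calF(F)^2 \leq |\calF|\cdot\binom{n-k}{k}^2$ can vastly exceed the budget $m\cdot D^2/\binom{n}{k}$ that $\Var[X]=O(\mathbb{E}[X]^2)$ would require, since $\binom{n-k}{k}$ can be close to $\binom{n}{k}$ when $k\ll n$, so that step as written does not close either. The paper instead uses a clean two-stage argument: $O(1/\eps)$ samples to hit a useful set $A$ with probability $\geq 5/6$, then $O(1/\eps)$ further samples to hit a member of $\calF$ disjoint from $A$ with probability $\geq 5/6$, and a union bound. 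This avoids any variance computation.
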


We further consider the $\Inter_{n,k,\eps}$ problem for integers $n$ and $k$ satisfying $n = \alpha \cdot k$ for an arbitrary constant $\alpha \geq 2$.
Interestingly, the canonical tester fails in this case, because its random samples are unlikely to include even a single pair of disjoint sets, unless the number of samples is exponential in $n$. Nevertheless, we show that for all constants $\alpha \geq 2$ and $\eps \in (0,1)$, the $\Inter_{n,k,\eps}$ problem with $n = \alpha \cdot k$ admits a non-adaptive one-sided error tester with constant query complexity.
The proof employs a result of Friedgut and Regev~\cite{FR18}, and the details are given in Section~\ref{sec:n=alpha*k} (see Theorem~\ref{thm:n=alpha*k}).

\begin{table}[ht]
    \centering
    \begin{tabular}{cccc}
        Tester & Condition & Query complexity & Result\\
        \hline \hline
        Two-sided error, tolerant & $\eps \geq \Omega( \frac{k}{n})$ & $O(\frac{1}{\eps})$ & Section~\ref{sec:two-sided}\\[1ex]
        & $\eps \geq \Omega( (\frac{k}{n})^r),~\mbox{fixed}~r \geq 2$ & $O(\frac{\ln n}{\eps})$ & Theorems~\ref{thm:1Intro},~\ref{thm:two-sided}\\[1ex]
        One-sided error & $\eps \geq \Omega( (\frac{k^2}{n})^2)$ & $O(\frac{1}{\eps})$ & Theorem~\ref{thm:3Intro}\\[1ex]
        & $\eps \geq \Omega( (\frac{k^2}{n})^r),~\mbox{fixed}~r \geq 3$ & $O(\frac{\ln k}{\eps})$ & Theorem~\ref{thm:2Intro}\\[1ex]
        
    \end{tabular}
    \caption{The query complexity of the $\Inter_{n,k,\eps}$ problem.}
    \label{tab:mytable}
\end{table}

Our final result supplies a lower bound on the query complexity of the $\Inter_{n,k,\eps}$ problem.
Note that the lower bound applies even to adaptive two-sided error testers.
While the result can be derived from a general result of~\cite{Fischer24}, we give a full proof which explicitly presents hard instances of the problem.

\begin{theorem}\label{thm:LBIntro}(Lower Bound)
For all integers $n$ and $k$ with $n \geq 2k$ and for any real $\eps = \eps(n,k)$ with $\binom{n}{k}^{-1} \leq \eps < \frac{1}{2}$, the query complexity of every tester for $\Inter_{n,k,\eps}$ is $\Omega(\frac{1}{\eps})$.
\end{theorem}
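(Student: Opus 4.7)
The plan is to prove the lower bound via Yao's minimax principle, by constructing a distribution $\calD_Y$ over intersecting families and a distribution $\calD_N$ over families that are $\eps$-far from intersecting, and showing that every deterministic $q$-query algorithm has expected error more than $1/3$ on the mixture $(\calD_Y+\calD_N)/2$ unless $q=\Omega(1/\eps)$. For $\calD_Y$ I would take the point mass on the empty family $\emptyset\subseteq\binom{[n]}{k}$, which is vacuously intersecting. For $\calD_N$, I would set $N=\binom{n}{k}$ and $L=\lfloor\eps N\rfloor+1$ so that $L>\eps N$, fix a reference matching $M^{*}=\{(A_{i}^{*},B_{i}^{*}):i\in[L]\}$ of $L$ disjoint pairs in $\binom{[n]}{k}$ with all $2L$ underlying sets distinct, draw a uniformly random permutation $\sigma$ of $[n]$, and output $\calF=\{\sigma(A_{i}^{*}),\sigma(B_{i}^{*}):i\in[L]\}$. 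Since $\calF$ contains $L$ pairwise disjoint pairs among its $2L$ distinct members, every intersecting subfamily of $\calF$ omits at least one set from each pair, so $\calF$ is at distance at least $L>\eps N$ from intersectingness.

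I would then fix any deterministic, possibly adaptive, $q$-query tester $\calA$ and analyze its behavior on the two distributions. Under $\calD_Y$ every query returns $0$, so $\calA$ issues a fixed sequence of queries $S_{1},\ldots,S_{q}$ and outputs a fixed verdict. The key calculation is that, under $\calD_N$, for every fixed $S\in\binom{[n]}{k}$ the probability $\Pr[S\in\calF]$ equals exactly $2L/N$: the $2L$ events $\{\sigma(A_{i}^{*})=S\}$ and $\{\sigma(B_{i}^{*})=S\}$ are pairwise disjoint (since the $2L$ reference sets are distinct and $\sigma$ is a bijection), and each has probability $1/N$ by transitivity of the symmetric group on $\binom{[n]}{k}$. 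A union bound over the queries $S_{1},\ldots,S_{q}$ then yields $\Pr_{\calD_N}\bigl[\forall i:\; S_{i}\notin\calF\bigr]\geq 1-2qL/N\geq 1-4q\eps$, using $\eps\geq 1/N$, so on this event $\calA$ sees exactly the same transcript as under $\calD_Y$ and outputs the same verdict.

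The conclusion follows from a case split on $\calA$'s verdict on the all-zero transcript: if $\calA$ accepts it, then $\Pr_{\calD_Y}[\calA\text{ rejects}]=0$ while $\Pr_{\calD_N}[\calA\text{ accepts}]\geq 1-4q\eps$, so the expected error on $(\calD_Y+\calD_N)/2$ is at least $(1-4q\eps)/2$; if $\calA$ rejects it, the expected error is at least $1/2$. Demanding that this expected error be at most $1/3$ forces $(1-4q\eps)/2\leq 1/3$, hence $q\geq 1/(12\eps)$, which is the claimed $\Omega(1/\eps)$ bound. The main technical nuisance I anticipate is securing the reference matching $M^{*}$ of size exactly $L$ for every admissible $\eps$; this is handled by combining a greedy argument with the Erd{\H{o}}s--Ko--Rado bound to produce a matching of size at least $\lfloor N/4\rfloor$ in $K(n,k)$ for every $n\geq 2k$, which covers the range $\eps\leq 1/4$, while for $\eps\in(1/4,1/2)$ the quantity $1/\eps$ is bounded by a constant and the desired bound is immediate from the trivial requirement that any tester must issue at least one query.
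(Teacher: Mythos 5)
Your proof is correct and reaches the same $\Omega(1/\eps)$ bound via the same overall skeleton (Yao's principle, per-query probability $O(\eps)$ of hitting the support of the no-distribution, union bound over the queries made on the all-zero branch, then a case split on that branch's verdict), but it differs from the paper's proof in the construction of the hard no-distribution in a way that makes it more elementary. The paper builds its distribution over a \emph{perfect} matching (or near-perfect, for odd $\binom{n}{k}$) of the Kneser graph $K(n,k)$, and to obtain one it invokes the recent Hamiltonicity theorem of Merino, M{\"u}tze, and Namrata; the paper's $\calD_{no}$ then samples a random $N$-subset of this fixed partition. You instead only require a matching of size about $N/4$, which you get directly from Erd{\H o}s--Ko--Rado (a maximal matching's uncovered vertices form an independent set, hence at most $(k/n)\binom{n}{k} \le N/2$ of them, so the matching has $\ge N/4$ edges), and you generate the randomness by pushing a fixed matching around with a uniformly random permutation of $[n]$. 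Both constructions achieve the same per-query bound $\Pr[f(A)=1] \le 4\eps$ for every fixed $A$ (the paper via the partition property, you via transitivity of $S_n$ on $\binom{[n]}{k}$ and disjointness of the $2L$ preimage events). Your approach therefore avoids the Hamiltonicity theorem at the modest cost of handling the regime $\eps$ close to $1/4$ separately; that range is indeed trivial since $\Omega(1/\eps)=\Omega(1)$ there and any tester must make at least one query (the empty family is intersecting while $\binom{[n]}{k}$ itself is $\eps$-far for $\eps<1/2$), though one should state the threshold as $\eps \le 1/4 - O(1/N)$ rather than $\eps \le 1/4$ to absorb the $\pm 1$ slack in $L=\lfloor \eps N \rfloor + 1$ versus $\lfloor N/4 \rfloor$.
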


It is noteworthy that Theorem~\ref{thm:LBIntro} implies that the query complexities achieved by our testers for the $\Inter_{n,k,\eps}$ problem are nearly tight.
Specifically, the query complexity obtained in Theorem~\ref{thm:3Intro} is tight up to a multiplicative constant, while those of Theorems~\ref{thm:1Intro} and~\ref{thm:2Intro} are tight up to multiplicative logarithmic terms.
An intriguing task for further research would be to decide whether these logarithmic terms can be avoided.
More ambitiously, it would be interesting to determine the query complexity of the $\Inter_{n,k,\eps}$ problem for general values of $n$, $k$, and $\eps = \eps(n,k)$.

\subsection{Proof Techniques}

We provide here a high-level description of the ideas applied in the proofs of Theorems~\ref{thm:1Intro} and~\ref{thm:2Intro}.
Let us start with the proof of Theorem~\ref{thm:1Intro}, which gives a non-adaptive two-sided error tester for the $\Inter_{n,k,\eps}$ problem with query complexity $O(\frac{\ln n}{\eps})$, where $\eps \geq \Omega( (\frac{k}{n})^r)$ for a fixed integer $r$. Given access to a family $\calF \subseteq \binom{[n]}{k}$, our tester attempts to decide whether $\calF$ is approximately contained in an intersecting $j$-junta over $[n]$ for some integer $j$ that depends solely on $r$. To do so, the tester picks random sets from $\binom{[n]}{k}$, uniformly and independently, and checks whether they lie in $\calF$. These samples are used to estimate, for each intersecting $j$-junta $\calJ$ over $[n]$, the fraction of sets in $\binom{[n]}{k}$ that lie in $\calF \setminus \calJ$. Our analysis shows that $O(\frac{\ln n}{\eps})$ samples suffice for these estimations to be pretty accurate with high probability. Then, if those estimations indicate that $\calF$ is approximately contained in some intersecting $j$-junta over $[n]$, the tester predicts that $\calF$ is close to intersecting. Otherwise, relying on the aforementioned structural result of Dinur and Friedgut~\cite{DinurF09}, the tester deduces that $\calF$ is far from intersecting with high probability.
Let us stress that the tester is two-sided error, because even if $\calF$ is intersecting and is essentially aligned with some intersecting $j$-junta $\calJ$ over $[n]$, the random samples might wrongly suggest, with low probability, that $\calF$ significantly deviates from $\calJ$.
As previously noted, Theorem~\ref{thm:1Intro} is proved with respect to tolerant property testing.
For the precise statement and argument, the reader is referred to Section~\ref{sec:two-sided}.

We next outline the approach applied in the proof of Theorem~\ref{thm:2Intro}, which establishes a non-adaptive one-sided error tester for the $\Inter_{n,k,\eps}$ problem with query complexity $O(\frac{\ln k}{\eps})$, where $\eps \geq \Omega( (\frac{k^2}{n})^r)$ for a fixed integer $r$. As mentioned earlier, the proof of this result relies on the canonical tester, which selects random sets from $\binom{[n]}{k}$, uniformly and independently, and checks if they include two sets that violate the intersectingness of the given family. Therefore, our goal is to show that if a family $\calF \subseteq \binom{[n]}{k}$ is $\eps$-far from intersecting, then a collection of $O(\frac{\ln k}{\eps})$ random sets from $\binom{[n]}{k}$ includes with high probability two disjoint sets within $\calF$.
Consider a family $\calF \subseteq \binom{[n]}{k}$, and assume that $\calF$ is $\eps$-far from intersecting. Notice that this in particular implies that $|\calF| > \eps \cdot \binom{n}{k}$, as otherwise, one could remove at most $\eps \cdot \binom{n}{k}$ of its sets to make it intersecting.

Let us first suppose that the chosen random sets include, for each set $A \subseteq [n]$ of size at most $r-1$, a set $F_A \in \calF$ with $F_A \cap A = \emptyset$. A key observation in our approach is that the number of sets in $\binom{[n]}{k}$ that intersect all of those sets $F_A$ does not exceed $k^r \cdot \binom{n-r}{k-r}$. To see this, observe that every set that intersects them all includes an element $j_1 \in F_\emptyset$, an element $j_2 \in F_{\{j_1\}}$, an element $j_3 \in F_{\{j_1, j_2\}}$, and so on, up to an element $j_r \in F_{\{j_1, \ldots, j_{r-1}\}}$, where the $r$ elements $j_1, \ldots, j_r$ are distinct. It thus follows that there exists a collection of at most $k^r$ subsets of $[n]$ of size $r$, such that every set in $\binom{[n]}{k}$ that intersects all the sampled sets that lie in $\calF$ contains at least one of the sets of the collection. This implies that the number of those sets does not exceed $k^r \cdot \binom{n-r}{k-r} \leq (\frac{k^2}{n})^r \cdot \binom{n}{k}$. Now, by combining our assumption on $\eps$ with the fact that $|\calF| > \eps \cdot \binom{n}{k}$, one can show that a random set from $\binom{[n]}{k}$ lies in $\calF$ and is disjoint from at least one of the previously sampled sets that lie in $\calF$ with non-negligible probability. Therefore, a few additional random sets from $\binom{[n]}{k}$ are expected to provide with high probability the desired witness for the non-intersectingness of $\calF$.

The scenario discussed above, however, is somewhat oversimplified. It definitely might be the case that for some set $A \subseteq [n]$ of size at most $r-1$, the sets of $\calF$ that are disjoint from $A$ are quite rare, and as such, our random samples are not expected to include them. Following the terminology of~\cite{DinurF09}, we say that such a set $A$ captures $\calF$ (see Definition~\ref{def:rebellious}). To deal with this situation, we show that if a set $A$ of size at most $r-1$ captures $\calF$, then it admits two disjoint subsets $B,C \subseteq A$, such that the sub-families $\calF_1$ and $\calF_2$ of $\calF$, defined as the restrictions of $\calF$ to the sets whose intersection with $A$ is, respectively, $B$ and $C$, are far from being cross-intersecting, that is, one has to remove plenty of sets from $\calF_1$ and $\calF_2$ so that every set of $\calF_1$ will intersect every set of $\calF_2$. Since $B$ and $C$ are disjoint, this essentially allows us to ignore from now on the elements of $A$, and to analyze the probability that the random samples include two disjoint sets, one from $\calF_1$ and one from $\calF_2$. By a slight variant of the key observation described above, we wish our samples to include, for each set $A' \subseteq [n] \setminus A$ of size at most $r-1$, a set $F_{A'} \in \calF_2$ with $F_{A'} \cap A' = \emptyset$. If the samples include such sets, then it can be shown that a random set is expected to lie in $\calF_1$ and be disjoint from some previously chosen set of $\calF_2$ with non-negligible probability. However, if some set $A' \subseteq [n] \setminus A$ of size at most $r-1$ captures $\calF_2$, then a suitable set $F_{A'}$ is unlikely to be found. In this case, we repeat the above process and further refine $\calF_1$ and $\calF_2$, fixing the intersections of their sets with $A'$ to some disjoint subsets and keeping them far from cross-intersecting. It might be needed to repeat this process multiple times, but a crucial component of our argument shows that $r$ iterations suffice. Indeed, after this number of iterations, it turns out that the refined family $\calF_2$ is already too small to still satisfy the invariant that $\calF_1$ and $\calF_2$ are far from cross-intersecting.

To summarize, our analysis of the canonical tester shows that if a family $\calF \subseteq \binom{[n]}{k}$ is $\eps$-far from intersecting for a sufficiently large $\eps = \eps(n,k)$, then there exist a set $A \subseteq [n]$ and two disjoint subsets $B,C \subseteq A$, such that (a) the restrictions $\calF_1$ and $\calF_2$ of $\calF$ to the sets whose intersections with $A$ are $B$ and $C$, respectively, are far from cross-intersecting, and (b) no subset of $[n] \setminus A$ of size at most $r-1$ captures $\calF_2$ (see Lemma~\ref{lemma:ABCinduction}). This allows us to show that $O(\frac{\ln k}{\eps})$ sets chosen randomly from $\binom{[n]}{k}$ include with high probability a collection of sets of $\calF_2$, such that relatively few sets of $\calF_1$ intersect them all. Therefore, a few additional samples from $\binom{[n]}{k}$ are expected to include a set of $\calF_1$ that is disjoint from one of the previously picked sets of $\calF_2$, resulting in the desired witness for non-intersectingness (see Lemma~\ref{lemma:samples}).
For the precise and full argument, the reader is referred to Section~\ref{sec:GeneralR}.

\subsection{Related Work}

The Erd{\H{o}}s--Ko--Rado theorem~\cite{EKR61} implies that for all integers $n$ and $k$ with $n \geq 2k$, every family $\calF \subseteq \binom{[n]}{k}$ whose size exceeds $\binom{n-1}{k-1}$ is not intersecting. One may thus ask whether such a family $\calF$ must include a set that is disjoint from many of the sets of $\calF$.
This question was recently investigated by Frankl and Kupavskii~\cite{FranklK20} and by Chau, Ellis, Friedgut, and Lifshitz~\cite{CEFL23}, who provided a positive answer in a strong sense. Namely, it was shown in~\cite{CEFL23} that for any $\delta >0$, there exists some $\alpha >0$, such that for all integers $n$ and $k$ with $n \geq \alpha \cdot k$, every family $\calF \subseteq \binom{[n]}{k}$ of size $|\calF| = \binom{n-1}{k-1} +1$ includes a set that is disjoint from at least $(\frac{1}{2}-\delta) \cdot \binom{n-k-1}{k-1}$ of the sets of $\calF$.

We point out that the combinatorial question arising in our analysis of the canonical tester for the $\Inter_{n,k\eps}$ problem is similar in spirit to the question studied in~\cite{FranklK20,CEFL23}.
Indeed, the analysis requires us to show that if a family $\calF \subseteq \binom{[n]}{k}$ is $\eps$-far from intersecting, then a relatively small collection of random sets, chosen uniformly and independently from $\binom{[n]}{k}$, includes with high probability two disjoint sets that lie in $\calF$. For such a statement, one has to show not only the existence of a set in $\calF$ that is disjoint from many of the sets of $\calF$, but that many of the sets of $\calF$ satisfy this property. Yet, a crucial difference between the two concepts is that in ours, the family $\calF$ is not assumed to exceed the Erd{\H{o}}s--Ko--Rado threshold but rather to be $\eps$-far from intersecting.

\subsection{Outline}
The rest of the paper is organized as follows.
In Section~\ref{sec:prel}, we provide some definitions and tools that will be used throughout the paper.
In Section~\ref{sec:two-sided}, we present and analyze our two-sided error tester for the $\Inter_{n,k,\eps}$ problem and confirm Theorem~\ref{thm:1Intro}.
In Section~\ref{sec:one-sided}, we present and analyze the one-sided error canonical tester for the $\Inter_{n,k,\eps}$ problem and confirm Theorems~\ref{thm:2Intro} and~\ref{thm:3Intro}.
We further present there another one-sided error tester, appropriate for integers $n$ and $k$ with $n = \Theta(k)$.
Finally, in Section~\ref{sec:lower}, we complement our algorithmic results by proving the lower bound stated in Theorem~\ref{thm:LBIntro}.

\section{Preliminaries}\label{sec:prel}

Throughout the paper, we omit all floor and ceiling signs, whenever these are not crucial.

\subsection{Intersecting Families}

For integers $n$ and $k$, let $\binom{[n]}{k}$ denote the family of all $k$-subsets of $[n]=\{1,\ldots,n\}$.
A family $\calF \subseteq \binom{[n]}{k}$ is called intersecting if for every two sets $F_1,F_2 \in \calF$, it holds that $F_1 \cap F_2 \neq \emptyset$.
For a real $\eps \in [0,1]$, we say that $\calF$ is $\eps$-close to intersecting if it is possible to make $\calF$ intersecting by removing at most $\eps \cdot \binom{n}{k}$ of its sets.
Otherwise, we say that $\calF$ is $\eps$-far from intersecting.

For two families $\calF_1,\calF_2 \subseteq \binom{[n]}{k}$, the pair $(\calF_1,\calF_2)$ is called cross-intersecting if for every two sets $F_1 \in \calF_1$ and $F_2 \in \calF_2$, it holds that $F_1 \cap F_2 \neq \emptyset$. For a real $\eps \in [0,1]$, we say that the pair $(\calF_1,\calF_2)$ is $\eps$-close to cross-intersecting if it is possible to make $(\calF_1,\calF_2)$ cross-intersecting by removing at most $\eps \cdot \binom{n}{k}$ of the sets of $\calF_1$ and $\calF_2$.
Otherwise, we say that $(\calF_1,\calF_2)$ is $\eps$-far from cross-intersecting.
Note that if a family $\calF \subseteq \binom{[n]}{k}$ is $\eps$-far from intersecting, then the pair $(\calF,\calF)$ is $\eps$-far from cross-intersecting.

A function $f: \binom{[n]}{k} \rightarrow \{0,1\}$ is called intersecting, $\eps$-close to intersecting, and $\eps$-far from intersecting, if the family $f^{-1}(1)$ of the sets of $\binom{[n]}{k}$ that are mapped by $f$ to $1$ is, respectively, intersecting, $\eps$-close to intersecting, and $\eps$-far from intersecting.
For reals $\eps_1,\eps_2 \in [0,1)$ with $\eps_1 \leq \eps_2$, let $\Inter_{n,k,\eps_1,\eps_2}$ denote the tolerant property testing problem that given query access to a function $f: \binom{[n]}{k} \rightarrow \{0,1\}$, asks to distinguish between the case that $f$ is $\eps_1$-close to intersecting and the case that $f$ is $\eps_2$-far from intersecting.
We will be particularly interested in the case of $\eps_1=0$, hence for a real $\eps \in [0,1)$, we let $\Inter_{n,k,\eps}$ denote the $\Inter_{n,k,0,\eps}$ problem.

\subsection{Chernoff--Hoeffding Bound}

We will need the following version of the Chernoff--Hoeffding bound.
\begin{theorem}\label{thm:Chernoff}
For an integer $m$ and a real $p \in (0,1)$, let $X_1, \ldots, X_m$ be independent binary random variables satisfying $\Prob{}{X_i=1}=p$ and $\Prob{}{X_i=0}=1-p$ for all $i \in [m]$, and put $\overline{X} = \frac{1}{m} \cdot \sum_{i=1}^{m}{X_i}$. Then, for any $\mu \geq 0$,
\begin{enumerate}
  \item\label{itm1:Chernoff} if $p \leq \mu$, then $\Prob{}{~\overline{X} \geq 2 \mu~} \leq e^{- m \cdot \mu/3}$, and
  \item\label{itm2:Chernoff} if $p \geq \mu$, then $\Prob{}{~\overline{X} \leq \mu/2~} \leq e^{- m \cdot \mu/8}$.
\end{enumerate}
\end{theorem}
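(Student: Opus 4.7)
The plan is to use the standard exponential moment (Chernoff) method. Writing $Y = \sum_{i=1}^{m} X_i$, so that $E[Y] = mp$ and $\overline{X} = Y/m$, I would begin from the one-line moment generating function estimate $1 - p + p e^{\lambda} \leq \exp(p(e^\lambda - 1))$, which by independence of the $X_i$'s gives $E[e^{\lambda Y}] \leq \exp(mp(e^\lambda - 1))$ for every real $\lambda$. Applying Markov's inequality to $e^{\lambda Y}$ with $\lambda > 0$, and then to $e^{-\lambda Y}$ with $\lambda > 0$, and optimizing $\lambda$ in each case, recovers the two classical multiplicative tail bounds I will use: the upper tail $\Pr[Y \geq (1+\delta) E[Y]] \leq \exp(-E[Y] \cdot h(\delta))$ with $h(\delta) = (1+\delta)\ln(1+\delta) - \delta$, and the lower tail $\Pr[Y \leq (1-\delta) E[Y]] \leq \exp(-\delta^2 E[Y]/2)$ valid for $\delta \in [0,1]$.

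For Item~\ref{itm1:Chernoff} I would write the threshold $2m\mu$ in the form $(1+\delta)mp$, so that $\delta = 2\mu/p - 1$. The hypothesis $p \leq \mu$ places $\delta$ in the large-deviation regime $\delta \geq 1$, in which the elementary one-variable inequality $h(\delta) \geq \delta/3$ holds. Feeding this into the upper-tail bound yields
\[
\Pr[\overline{X} \geq 2\mu] \;\leq\; \exp\!\bigl(-mp \cdot \delta / 3\bigr) \;=\; \exp\!\bigl(-m(2\mu - p)/3\bigr) \;\leq\; \exp(-m\mu/3),
\]
where the last step uses $p \leq \mu$ once more.

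For Item~\ref{itm2:Chernoff} I would write $m\mu/2$ in the form $(1-\delta)mp$, so that $\delta = 1 - \mu/(2p)$. The hypothesis $p \geq \mu$ places $\delta$ in the range $[1/2, 1)$, and the lower-tail bound then gives
\[
\Pr[\overline{X} \leq \mu/2] \;\leq\; \exp\!\bigl(-mp \cdot \delta^2 / 2\bigr) \;\leq\; \exp(-mp/8) \;\leq\; \exp(-m\mu/8),
\]
using $\delta \geq 1/2$ in the middle step and $p \geq \mu$ at the end. The argument is essentially mechanical, so there is no real obstacle; the only point worth sanity checking is that the conditions $p \leq \mu$ and $p \geq \mu$ in the two items are exactly what one needs to push $\delta$ into the respective ranges that justify the claimed constants $1/3$ and $1/8$.
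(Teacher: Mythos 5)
Your proof is correct, but it takes a genuinely different route from the paper's. The paper's argument is a two-line pointer: invoke a standard statement of the Chernoff--Hoeffding bound (McDiarmid, Theorem~2.3) to get both items in the boundary case $p = \mu$, and then observe that the two tail events $\{\overline{X} \geq 2\mu\}$ and $\{\overline{X} \leq \mu/2\}$ are, respectively, monotone increasing and monotone decreasing in $p$ under the natural stochastic coupling, so the bounds for $p < \mu$ and $p > \mu$ follow immediately. You instead give a fully self-contained derivation from the exponential moment method: you recover the multiplicative tails with rate functions $h(\delta) = (1+\delta)\ln(1+\delta) - \delta$ and $\delta^2/2$, reparametrize the thresholds $2\mu$ and $\mu/2$ as $(1+\delta)p$ and $(1-\delta)p$, and note that the hypotheses $p \leq \mu$ and $p \geq \mu$ force $\delta \geq 1$ and $\delta \geq 1/2$, where $h(\delta) \geq \delta/3$ and $\delta^2/2 \geq 1/8$ respectively; the final substitutions $mp\delta = m(2\mu - p) \geq m\mu$ and $mp \geq m\mu$ close both bounds. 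The checks all go through (the elementary inequality $h(\delta) \geq \delta/3$ for $\delta \geq 1$ holds since $h(1) - 1/3 = 2\ln 2 - 4/3 > 0$ and $h(\delta) - \delta/3$ is increasing there, and the degenerate endpoints $2\mu > 1$ or $\mu = 0$ are trivial). What the paper's route buys is brevity and the conceptual clarity that only the $p = \mu$ case needs a computation; what yours buys is a proof that is self-contained and makes the origin of the constants $1/3$ and $1/8$ explicit, which is arguably more informative for a reader who wants to tune them.
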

\noindent
Note that the assertion of Theorem~\ref{thm:Chernoff} for $p=\mu$ follows from standard statements of the Chernoff--Hoeffding bound (see, e.g,~\cite[Theorem~2.3]{McDiarmid98}).
The cases of $p < \mu$ and $p > \mu$ stem by monotonicity.

\section{Two-Sided Error Tester}\label{sec:two-sided}

In this section, we present and analyze a tolerant non-adaptive two-sided error tester for the $\Inter_{n,k,\eps_1,\eps_2}$ problem and prove the following result.
Its special case with $\eps_1 = 0$ yields Theorem~\ref{thm:1Intro}.

\begin{theorem}\label{thm:two-sided}
For every integer $r \geq 2$, there exist constants $c_1 = c_1(r)$ and $c_2 = c_2(r)$ for which the following holds.
For all sufficiently large integers $n$ and $k$ with $n \geq 2 k$ and for all reals $\eps_1,\eps_2 \in [0,1)$ with $\eps_2 \geq 4 \cdot \eps_1 + c_1 \cdot (\frac{k}{n})^r$, there exists a tolerant non-adaptive two-sided error tester for $\Inter_{n,k,\eps_1,\eps_2}$ with at most $c_2 \cdot \frac{\ln n}{\eps_2}$ queries, running time polynomial in $n$, and success probability at least $2/3$.
\end{theorem}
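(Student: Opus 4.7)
The plan is to reduce the problem to recognizing whether the input family is approximately contained in some intersecting $j$-junta over $[n]$, for a constant $j = j(r)$ supplied by the Dinur--Friedgut structural theorem (Theorem~\ref{thm:DF}). That theorem guarantees that every intersecting $\calF' \subseteq \binom{[n]}{k}$ can be made contained in some intersecting $j$-junta by removing at most a $\delta$-fraction of $\binom{n}{k}$, provided $\delta = \Omega((k/n)^r)$ and $j$ is chosen as a suitable constant depending on $r$. Since a $j$-junta over $[n]$ is specified by a $j$-subset $J \subseteq [n]$ together with a Boolean function on $2^J$, the number of intersecting $j$-juntas is at most $\binom{n}{j} \cdot 2^{2^j} = n^{O(1)}$, and they can be enumerated in time polynomial in $n$.

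My tester draws $m = c_2 \ln n / \eps_2$ sets from $\binom{[n]}{k}$ uniformly and independently, queries the input $f$ on each of them (non-adaptively), and then, for every intersecting $j$-junta $\calJ$ over $[n]$, computes the empirical estimate $\widehat{p}_\calJ$ of $p_\calJ := |f^{-1}(1) \setminus \calJ|/\binom{n}{k}$. It accepts iff some intersecting $j$-junta $\calJ$ yields $\widehat{p}_\calJ \leq \eps_2/2$; otherwise it rejects.

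For completeness, suppose $f$ is $\eps_1$-close to intersecting, so some intersecting $\calF^\star \subseteq \binom{[n]}{k}$ satisfies $|f^{-1}(1) \setminus \calF^\star| \leq \eps_1 \binom{n}{k}$. Applying Theorem~\ref{thm:DF} to $\calF^\star$ with slack $\delta = (c_1/4)(k/n)^r$ yields an intersecting $j$-junta $\calJ$ with $p_\calJ \leq \eps_1 + \delta \leq \eps_2/4$, and Part~\ref{itm1:Chernoff} of Theorem~\ref{thm:Chernoff} applied with $\mu = \eps_2/4$ makes $\widehat{p}_\calJ \leq \eps_2/2$ hold with probability $1 - e^{-\Omega(m \eps_2)}$. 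For soundness, suppose $f$ is $\eps_2$-far from intersecting; then for every intersecting $j$-junta $\calJ$ (whose restriction to $\binom{[n]}{k}$ is itself an intersecting subfamily), one must have $p_\calJ > \eps_2$, and Part~\ref{itm2:Chernoff} of Theorem~\ref{thm:Chernoff} with $\mu = \eps_2$ gives $\widehat{p}_\calJ > \eps_2/2$ for each fixed $\calJ$ with the same exponential tail. Taking $c_2$ large enough relative to the polynomial-in-$n$ junta count lets a single union bound keep the overall failure probability below $1/3$, and the running time is clearly polynomial in $n$.

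The main obstacle is aligning constants so that the hypothesis $\eps_2 \geq 4 \eps_1 + c_1 (k/n)^r$ simultaneously absorbs the Dinur--Friedgut slack $\delta = \Theta((k/n)^r)$ and leaves enough multiplicative room on both sides of the threshold $\eps_2/2$ for the Chernoff bounds to be informative; the chain $\eps_1 + \delta \leq \eps_2/4 \leq \eps_2/2 \leq \eps_2$ is precisely what dictates the stated relation among $\eps_1$, $\eps_2$, and $(k/n)^r$. The two-sided error is genuinely necessary here, because even an exactly intersecting $f$ whose $p_\calJ$ sits just above $0$ for its best junta $\calJ$ can be unlucky enough to push $\widehat{p}_\calJ$ above the threshold for every intersecting $j$-junta simultaneously, in line with Tell's impossibility result cited earlier.
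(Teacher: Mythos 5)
Your proposal matches the paper's own argument essentially step for step: the same junta-based tester drawn from Dinur–Friedgut, the same acceptance threshold of $\eps_2/2$ on the empirical estimates, the same Chernoff–plus–union-bound analysis over the $\binom{n}{j}\cdot 2^{2^j}$ intersecting $j$-juntas, and the same constant bookkeeping placing $\eps_1 + a(k/n)^r \leq \eps_2/4$ below the threshold and $\eps_2$ above it. The only cosmetic difference is that you phrase the Dinur–Friedgut conclusion as a chosen slack $\delta$, whereas the theorem directly yields $a\cdot\binom{n-r}{k-r}$ and one then takes $c_1 = 4a$; this does not change the argument.
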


A crucial ingredient in the proof of Theorem~\ref{thm:two-sided} is the following theorem due to Dinur and Friedgut~\cite{DinurF09}.
Here, a family $\calJ$ of subsets of $[n]$ is called a $j$-junta over $[n]$ if there exists a set $J \subseteq [n]$ of size $|J|=j$ such that the membership of a set $F$ in $\calJ$ depends only on $F \cap J$.
\begin{theorem}[\cite{DinurF09}]\label{thm:DF}
For every integer $r \geq 2$, there exist constants $j = j(r)$ and $a = a(r)$ for which the following holds.
For all integers $n$ and $k$ with $n > 2 k$ and $k > j$, if a family $\calF \subseteq \binom{[n]}{k}$ is intersecting, then there exists an intersecting $j$-junta $\calJ$ over $[n]$ such that $|\calF \setminus \calJ | \leq a \cdot \binom{n-r}{k-r}$.
\end{theorem}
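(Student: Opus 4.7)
The plan is to design a non-adaptive tester that draws $m = c_2 \ln n / \eps_2$ sets $F_1, \dots, F_m$ from $\binom{[n]}{k}$ uniformly and independently, queries $f$ on each, and then for every intersecting $j$-junta $\calJ$ over $[n]$ (with $j = j(r)$ the constant supplied by Theorem~\ref{thm:DF}) computes the empirical estimate $\widehat{\rho}(\calJ) = \frac{1}{m}\cdot |\{i : f(F_i)=1 \text{ and } F_i \notin \calJ\}|$ of the true density $\rho(\calJ) := |\calF \setminus \calJ|/\binom{n}{k}$, where $\calF := f^{-1}(1)$. The tester accepts iff $\min_{\calJ} \widehat{\rho}(\calJ) \leq \tau$ for the threshold $\tau := \eps_2/2$. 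The number of intersecting $j$-juntas is at most $\binom{n}{j} \cdot 2^{2^j} = n^{O(1)}$ since $j = j(r)$ is a constant, so the tester runs in time polynomial in $n$, and since the $m$ samples are chosen in advance the tester is non-adaptive.

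For completeness, if $f$ is $\eps_1$-close to intersecting, fix an intersecting $\calF' \subseteq \calF$ with $|\calF \setminus \calF'| \leq \eps_1\cdot \binom{n}{k}$, and apply Theorem~\ref{thm:DF} to $\calF'$ (valid for sufficiently large $k$, since $k > j(r)$) to obtain an intersecting $j$-junta $\calJ^*$ satisfying $|\calF' \setminus \calJ^*| \leq a \cdot \binom{n-r}{k-r} \leq a \cdot (k/n)^r \cdot \binom{n}{k}$. Combining, $\rho(\calJ^*) \leq \eps_1 + a\cdot (k/n)^r$, and the hypothesis $\eps_2 \geq 4\eps_1 + c_1\cdot (k/n)^r$ with $c_1 := 4 a(r)$ forces $\rho(\calJ^*) \leq \eps_2/4 = \tau/2$. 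For soundness, the argument is simpler: if $f$ is $\eps_2$-far from intersecting, then for every intersecting family $\calJ$ over $[n]$, removing the $|\calF \setminus \calJ|$ sets of $\calF \setminus \calJ$ from $\calF$ would leave an intersecting family, so by definition of $\eps_2$-farness $|\calF \setminus \calJ| > \eps_2\cdot \binom{n}{k}$; that is, every intersecting junta $\calJ$ satisfies $\rho(\calJ) > \eps_2 = 2\tau$.

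The probabilistic step is a union bound coupled with Theorem~\ref{thm:Chernoff}. For each fixed intersecting $j$-junta $\calJ$, the variable $m\cdot \widehat{\rho}(\calJ)$ is a sum of $m$ i.i.d.\ Bernoullis with parameter $\rho(\calJ)$. Applying item~\ref{itm1:Chernoff} of Theorem~\ref{thm:Chernoff} with $\mu = \tau$ bounds the probability that a good junta (with $\rho(\calJ^*) \leq \tau/2$) fails to be flagged by $e^{-m\tau/3}$, and item~\ref{itm2:Chernoff} with $\mu = 2\tau$ bounds the probability that a bad junta (with $\rho(\calJ)>2\tau$) is falsely flagged by $e^{-m\tau/4}$. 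Taking $m = c_2(r) \cdot \ln n / \eps_2$ with $c_2(r)$ large enough drives each such probability below $1/(3 N_j)$, where $N_j = n^{O(1)}$ is the number of intersecting $j$-juntas, so a union bound gives overall success probability at least $2/3$.

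The main obstacle in executing the plan is the precise numerical balancing that yields the stated hypothesis $\eps_2 \geq 4\eps_1 + c_1\cdot (k/n)^r$: one needs the threshold $\tau$ to sit a factor of two above the worst-case density of the best junta in the $\eps_1$-close case and a factor of two below $\eps_2$ in the far case, so that the two-sided Chernoff deviation of factor two can be absorbed. This is what dictates $c_1 = 4 a(r)$ and, via $\ln N_j = O_r(\ln n)$, the value of $c_2(r)$. The remaining technicalities (verifying $k > j(r)$ from the "sufficiently large" clause, and bounding $\binom{n-r}{k-r}/\binom{n}{k} \leq (k/n)^r$ via a routine falling-factorial calculation) are straightforward.
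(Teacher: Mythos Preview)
Your proposal does not address the stated theorem. Theorem~\ref{thm:DF} is the structural result of Dinur and Friedgut, quoted in the paper as an external input from~\cite{DinurF09} and not proved there; it asserts that every intersecting family in $\binom{[n]}{k}$ is approximately contained in a bounded-size intersecting junta. What you have written is instead a proof of Theorem~\ref{thm:two-sided} (the tolerant two-sided tester), which \emph{uses} Theorem~\ref{thm:DF} as a black box. Nothing in your write-up establishes the junta approximation itself.

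If your intent was actually to prove Theorem~\ref{thm:two-sided}, then your argument is correct and essentially identical to the paper's: sample $m = \Theta_r(\ln n/\eps_2)$ sets, estimate $|\calF\setminus\calJ|/\binom{n}{k}$ for every intersecting $j$-junta $\calJ$, accept iff some estimate falls below $\eps_2/2$, and combine Chernoff with a union bound over the $n^{O_r(1)}$ juntas. Your choice $c_1 = 4a(r)$ and the threshold placement match the paper's (the paper writes the condition as $\eps_2 \geq 4(\eps_1 + a(k/n)^r)$, equivalent to yours). The only minor slip is in your Chernoff application: with $\mu = \tau = \eps_2/2$, item~\ref{itm1:Chernoff} requires $p \leq \mu$, whereas you have $\rho(\calJ^*) \leq \tau/2$; the paper instead sets $\mu = \eps_2/4$ there. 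This is cosmetic and does not affect correctness.
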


We are ready to prove Theorem~\ref{thm:two-sided}.

\begin{proof}[ of Theorem~\ref{thm:two-sided}]
Fix an integer $r \geq 2$, and let $j=j(r)$ and $a=a(r)$ be the constants given in Theorem~\ref{thm:DF}.
Let $n$ and $k$ be sufficiently large integers. Since the theorem trivially holds for $n=2k$ (with an appropriate $c_1$), it may be assumed that $n > 2k$ and $k > j$.
For an integer $m$ and for reals $\eps_1,\eps_2 \in [0,1)$ with $\eps_1 \leq \eps_2$, consider the following tester for the $\Inter_{n,k,\eps_1,\eps_2}$ problem.
\begin{tcolorbox}
Input: Query access to a function $f: \binom{[n]}{k} \rightarrow \{0,1\}$.
\begin{enumerate}
  \item Pick $m$ sets $G_1, \ldots, G_m$ uniformly and independently at random from $\binom{[n]}{k}$.
  \item Query $f$ for the value of $f(G_i)$ for each $i \in [m]$.
  \item For each intersecting $j$-junta $\calJ$ over $[n]$, let $\alpha_\calJ$ denote the fraction of the chosen sets $G_i$ that are mapped by $f$ to $1$ and do not lie in $\calJ$, that is, \[ \alpha_\calJ = \tfrac{1}{m} \cdot |\{i \in [m] \mid f(G_i)=1~\mbox{and}~G_i \notin  \calJ\}|.\]
  \item If there exists an intersecting $j$-junta $\calJ$ over $[n]$ with $\alpha_\calJ \leq \frac{\eps_2}{2}$, then accept, and otherwise reject.
\end{enumerate}
\end{tcolorbox}

The above tester is clearly non-adaptive.
For the analysis of its success probability, suppose that
\begin{eqnarray}\label{eq:eps,m_two-sided}
\eps_2 \geq 4 \cdot \bigg ( \eps_1 + a \cdot \Big (\frac{k}{n} \Big )^r \bigg )~~~~\mbox{and}~~~~m \geq \frac{12 \cdot (j \cdot \ln n +2^{j}+2)}{\eps_2}.
\end{eqnarray}
Let $f: \binom{[n]}{k} \rightarrow \{0,1\}$ be an input function, and consider the family $\calF = f^{-1}(1)$.
Our goal is to prove that if $\calF$ is $\eps_1$-close to intersecting then the tester accepts $f$ with probability at least $2/3$, and that if $\calF$ is $\eps_2$-far from intersecting then the tester rejects $f$ with probability at least $2/3$.

For each intersecting $j$-junta $\calJ$ over $[n]$, let $p_\calJ$ denote the fraction of the sets of $\binom{[n]}{k}$ that are mapped by $f$ to $1$ and do not lie in $\calJ$, that is, \[ p_\calJ = \frac{1}{\binom{n}{k}} \cdot \bigg | \bigg \{G \in \binom{[n]}{k} ~\bigg | ~f(G)=1~\mbox{and}~G \notin \calJ \bigg \} \bigg |.\]
We shall prove that, with high probability, the values of $p_\calJ$ are well approximated by the quantities $\alpha_\calJ$ that our tester computes.
Let $\calE$ denote the event that for every intersecting $j$-junta $\calJ$ over $[n]$, it holds that
\begin{enumerate}
  \item if $p_\calJ \leq \frac{\eps_2}{4}$, then $\alpha_\calJ < \frac{\eps_2}{2}$, and
  \item if $p_\calJ \geq \eps_2$, then $\alpha_\calJ > \frac{\eps_2}{2}$.
\end{enumerate}
For a fixed intersecting $j$-junta $\calJ$ over $[n]$, apply Item~\ref{itm1:Chernoff} of Theorem~\ref{thm:Chernoff} with $\mu = \frac{\eps_2}{4}$ to obtain that if $p_\calJ \leq \frac{\eps_2}{4}$, then $\alpha_\calJ \geq \frac{\eps_2}{2}$ with probability at most $e^{-m \cdot \eps_2 /12}$. Further, apply Item~\ref{itm2:Chernoff} of Theorem~\ref{thm:Chernoff} with $\mu = \eps_2$ to obtain that if $p_\calJ \geq \eps_2$, then $\alpha_\calJ \leq \frac{\eps_2}{2}$ with probability at most $e^{-m \cdot \eps_2 /8} \leq e^{-m \cdot \eps_2 /12}$.
The number of intersecting $j$-juntas over $[n]$ is clearly bounded by $\binom{n}{j} \cdot 2^{2^j}$.
Therefore, by the union bound, the probability that there exists an intersecting $j$-junta $\calJ$ over $[n]$ for which either $p_\calJ \leq \frac{\eps_2}{4}$ and $\alpha_\calJ \geq \frac{\eps_2}{2}$, or $p_\calJ \geq \eps_2$ and $\alpha_\calJ \leq \frac{\eps_2}{2}$ does not exceed
\[\binom{n}{j} \cdot 2^{2^j} \cdot e^{-m \cdot \eps_2 / 12} \leq n^j \cdot  2^{2^j} \cdot e^{-m \cdot \eps_2 / 12} \leq e^{-2} < \tfrac{1}{3},\]
where the second inequality holds by our assumption on $m$ given in~\eqref{eq:eps,m_two-sided}.
Therefore, the event $\calE$ occurs with probability at least $2/3$.

It suffices to show now that if the event $\calE$ occurs, then the answer of our tester is correct.
Suppose first that $\calF$ is $\eps_1$-close to intersecting.
By definition, there exists an intersecting family $\calF' \subseteq \binom{[n]}{k}$ such that $|\calF \setminus \calF'| \leq \eps_1 \cdot \binom{n}{k}$.
By Theorem~\ref{thm:DF}, using $n > 2k$ and $k>j$, there exists an intersecting $j$-junta $\calJ$ over $[n]$ such that $|\calF' \setminus \calJ | \leq a \cdot \binom{n-r}{k-r} \leq a \cdot (\frac{k}{n} )^r \cdot \binom{n}{k}$. It thus follows that this $\calJ$ satisfies
\[ |\calF \setminus \calJ | \leq \eps_1 \cdot \binom{n}{k} + |\calF' \setminus \calJ| \leq \eps_1 \cdot \binom{n}{k} + a \cdot \Big (\frac{k}{n} \Big )^r \cdot \binom{n}{k} \leq  \frac{\eps_2}{4} \cdot \binom{n}{k},\]
where the last inequality relies on our assumption on $\eps_1$ and $\eps_2$ in~\eqref{eq:eps,m_two-sided}.
This implies that $p_\calJ \leq \frac{\eps_2}{4}$, and since the event $\calE$ occurs, it follows that $\alpha_\calJ < \frac{\eps_2}{2}$, hence our tester accepts $f$.
Next, suppose that $\calF$ is $\eps_2$-far from intersecting. This implies, for each intersecting $j$-junta $\calJ$ over $[n]$, that $p_\calJ > \eps_2$, as otherwise, one could remove at most $\eps_2 \cdot \binom{n}{k}$ of the sets of $\calF$ to obtain a sub-family of the intersecting family $\calJ$. Since the event $\calE$ occurs, each such $\calJ$ satisfies $\alpha_\calJ > \frac{\eps_2}{2}$, hence our tester rejects $f$.

Finally, let $m$ be the smallest integer satisfying the condition in~\eqref{eq:eps,m_two-sided}.
Observe that the running time of our tester is polynomial in $m$ and in the number of intersecting $j$-juntas over $[n]$, where $j$ depends only on $r$. It follows that the running time is polynomial in $n$ and $1/\eps_2$, so by our assumption on $\eps_2$ in~\eqref{eq:eps,m_two-sided}, it is polynomial in $n$. This completes the proof.
\end{proof}

\begin{remark}
The assumption on $\eps_1$ and $\eps_2$ in Theorem~\ref{thm:two-sided} can be weakened, by an almost identical proof, to $\eps_2 \geq (1+\delta) \cdot \eps_1 + c_1 \cdot (\frac{k}{n})^r$ with an arbitrary $\delta >0$. For simplicity of presentation, we omit the details.
\end{remark}

We conclude this section with the observation that for all integers $n$ and $k$ with $n \geq 2k$ and for all reals $\eps_1,\eps_2 \in [0,1)$ with $\eps_2 \geq 4 \cdot (\eps_1+\frac{k}{n})$, there exists an efficient tolerant non-adaptive two-sided error tester for $\Inter_{n,k,\eps_1, \eps_2}$ with $O(\frac{1}{\eps_2})$ queries and high success probability. To see this, consider the tester that given access to a function $f: \binom{[n]}{k} \rightarrow \{0,1\}$ picks $O(\frac{1}{\eps_2})$ sets uniformly and independently at random from $\binom{[n]}{k}$, queries $f$ on them, and computes the fraction $\alpha$ of the chosen sets that are mapped by $f$ to $1$. If $\alpha \leq \frac{\eps_2}{2}$, then the tester accepts, and otherwise it rejects.

For correctness, let $p$ denote the fraction of the sets of $\binom{[n]}{k}$ that are mapped by $f$ to $1$.
If $f$ is $\eps_1$-close to intersecting, then $f$ can be made intersecting by flipping at most $\eps_1 \cdot \binom{n}{k}$ of its values. By the Erd{\H{o}}s--Ko--Rado theorem~\cite{EKR61}, the fraction of the sets of $\binom{[n]}{k}$ in an intersecting family does not exceed $\frac{k}{n}$, hence $p \leq \eps_1+\frac{k}{n} \leq \frac{\eps_2}{4}$. In this case, Theorem~\ref{thm:Chernoff} yields that with high probability, it holds that $\alpha \leq \frac{\eps_2}{2}$ and the tester accepts. On the other hand, if $f$ is $\eps_2$-far from intersecting, then it holds that $p > \eps_2$, so using again Theorem~\ref{thm:Chernoff}, it follows that with high probability, we have $\alpha > \frac{\eps_2}{2}$ and the tester rejects.

\section{One-Sided Error Tester}\label{sec:one-sided}

In this section, we present non-adaptive one-sided error testers for the $\Inter_{n,k,\eps}$ problem.
We start with the canonical tester, which is used to prove Theorems~\ref{thm:2Intro} and~\ref{thm:3Intro}.
We then offer another non-adaptive one-sided error tester, appropriate for integers $n$ and $k$ with $n = \Theta(k)$.

\subsection{Canonical Tester}
Consider the following tester for the $\Inter_{n,k,\eps}$ problem.
\begin{tcolorbox}
$\Sampler~(n,k,m)$ \\
Input: Query access to a function $f: \binom{[n]}{k} \rightarrow \{0,1\}$.
\begin{enumerate}
  \item Pick $m$ sets $G_1, \ldots, G_m$ uniformly and independently at random from $\binom{[n]}{k}$.
  \item Query $f$ for the value of $f(G_i)$ for each $i \in [m]$.
  \item If for every two indices $i_1,i_2 \in [m]$ with $G_{i_1} \cap G_{i_2} = \emptyset$, it holds that either $f(G_{i_1})=0$ or $f(G_{i_2})=0$, then accept, and otherwise reject.
\end{enumerate}
\end{tcolorbox}

The description of $\Sampler$ immediately implies that it is non-adaptive and that it accepts every intersecting function with probability $1$ and is thus one-sided error.
In what follows, we analyze the number of queries $m$ needed to reject with high probability any function $\eps$-far from intersecting for $\eps \geq \Omega( (\frac{k^2}{n})^r )$, where $r$ is a fixed integer.
We start with the case of $r=2$, for which we offer a simple analysis that yields a bound of $O(\frac{1}{\eps})$ on the query complexity.
We then consider the case of a general $r$, for which we obtain a bound of $O(\frac{\ln k}{\eps})$ on the query complexity.

\subsubsection{The case \texorpdfstring{$r=2$}{r=2}}

We prove the following result, which confirms Theorem~\ref{thm:3Intro}.

\begin{theorem}\label{thm:r=2}
For all integers $n$ and $k$ with $n \geq 2k$ and for any real $\eps \in [0,1)$ with $\eps \geq 2 \cdot (\frac{k^2}{n})^2$, there exists a non-adaptive one-sided error tester for $\Inter_{n,k,\eps}$ with $O(\frac{1}{\eps})$ queries, running time polynomial in $n$, and success probability at least $2/3$.
\end{theorem}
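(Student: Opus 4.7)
The plan is to apply the canonical tester $\Sampler(n,k,m)$ with $m = c/\eps$ queries, for a sufficiently large absolute constant $c$. Since the tester is non-adaptive, one-sided error, and makes $O(1/\eps)$ queries by construction, it suffices to show that if $\calF$ is $\eps$-far from intersecting, then with probability at least $2/3$ the sample contains two disjoint members of $\calF$.

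I will use two consequences of $\calF$ being $\eps$-far from intersecting. First, $|\calF| > \eps \binom{n}{k}$, since otherwise deleting all of $\calF$ would make it intersecting. Second, for every $j \in [n]$ the family $\{F \in \calF : j \in F\}$ is intersecting, so the sub-family $\calF_{\bar j} := \{F \in \calF : j \notin F\}$ must satisfy $|\calF_{\bar j}| > \eps \binom{n}{k}$; in the language of the paper, no singleton captures $\calF$. I also record the $r=2$ decision-tree estimate: for any two disjoint sets $F_1, F_2 \in \binom{[n]}{k}$, every $G \in \binom{[n]}{k}$ intersecting both must include an element $j_1 \in F_1$ and a distinct element $j_2 \in F_2$, so such $G$'s number at most $k^2 \binom{n-2}{k-2} \leq (\eps/2)\binom{n}{k}$ under the hypothesis $\eps \geq 2(k^2/n)^2$.

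The key combinatorial input is a quadratic lower bound on $D := |\{(F_1,F_2) \in \calF^2 : F_1 \cap F_2 = \emptyset\}|$, namely $D = \Omega(\eps^2 \binom{n}{k}^2)$. Writing $D = |\calF|^2 - |\{(F_1,F_2) \in \calF^2 : F_1 \cap F_2 \neq \emptyset\}|$ and bounding the intersecting-pair count by $\sum_{j \in [n]} |\{F \in \calF : j \in F\}|^2$, I use the estimate $|\{F \in \calF : j \in F\}| \leq \binom{n-1}{k-1} = (k/n)\binom{n}{k}$ together with $\sum_j |\{F \in \calF : j \in F\}| = k|\calF|$ to obtain $D \geq |\calF|(|\calF| - (k^2/n)\binom{n}{k})$. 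When $|\calF| \geq 2(k^2/n)\binom{n}{k}$ this yields $D \geq |\calF|^2/2 \geq \eps^2 \binom{n}{k}^2/2$. The complementary regime $|\calF| < 2(k^2/n)\binom{n}{k}$ is handled via a separate argument exploiting the hypothesis $\eps \geq 2(k^2/n)^2$ together with a Bollob\'as-type cross-intersecting inequality, which rules out ``sparse matching'' configurations in $\calF$ and forces many disjoint pairs beyond what the vertex-cover bound provides.

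Given $D = \Omega(\eps^2 \binom{n}{k}^2)$, let $Y$ count the number of ordered sample-pairs $(i,j)$, $i \neq j$, with $G_i, G_j \in \calF$ and $G_i \cap G_j = \emptyset$. Then $\mathbb{E}[Y] = m(m-1) D/\binom{n}{k}^2 = \Omega(c^2)$. The variance of $Y$ is controlled by using the $r=2$ decision-tree estimate above to bound the correlation between sample-pairs sharing a common index. A standard Chebyshev computation then yields $\Pr[Y \geq 1] \geq 2/3$ for $c$ large enough, completing the proof. The main obstacle will be a clean treatment of the small-$|\calF|$ regime in the lower bound on $D$; the large-$|\calF|$ regime follows directly from the EKR-style counting sketched above.
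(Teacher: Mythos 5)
Your plan is based on the canonical tester, which matches the paper's choice, but your route to analyzing it diverges and has genuine gaps. The paper does not go through a second-moment bound on the number of disjoint pairs. Instead, it proves (Lemma~\ref{lemma:M}) that a family $\eps$-far from intersecting contains a matching of more than $\frac{\eps}{2}\binom{n}{k}$ pairwise-disjoint disjoint pairs, and then (Lemma~\ref{lemma:useful}) that, combining this matching with your $r=2$ observation, more than $\frac{\eps}{2}\binom{n}{k}$ sets of $\calF$ are each disjoint from at least $\frac{1}{2}\bigl(|\calF| - k^2\binom{n-2}{k-2}\bigr) \geq \frac{\eps}{4}\binom{n}{k}$ sets of $\calF$. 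This ``many useful sets'' statement drives a clean two-stage argument: $O(1/\eps)$ samples find a useful set with probability $5/6$, then $O(1/\eps)$ more find a set of $\calF$ disjoint from it. No second moment is needed, and no case split on $|\calF|$ is needed.

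Your version has two concrete problems. First, your ``large-$|\calF|$'' computation requires $|\calF| \geq 2\frac{k^2}{n}\binom{n}{k}$ to conclude $D \geq |\calF|^2/2$, but the only lower bound you have is $|\calF| > \eps\binom{n}{k}$, and near the threshold $\eps = 2(k^2/n)^2$ this is $2(\frac{k^2}{n})^2\binom{n}{k}$, which is \emph{smaller} than $2\frac{k^2}{n}\binom{n}{k}$ (since $k^2/n \leq 1$). So the regime you can actually handle may be empty, and everything falls into the ``small-$|\calF|$'' case, which you explicitly defer to an unspecified Bollob\'as-type argument. Note that the target estimate $D = \Omega(\eps^2\binom{n}{k}^2)$ is in fact true (it follows from the paper's Lemma~\ref{lemma:useful} via $D \geq \frac{\eps}{2}\binom{n}{k} \cdot \frac{\eps}{4}\binom{n}{k}$), but you do not establish it. Second, the Chebyshev step is not as routine as you suggest: the dominant non-diagonal variance term involves $\sum_{F \in \calF} d(F)^2$ where $d(F)$ counts the sets of $\calF$ disjoint from $F$, and controlling this by $(\mathbb{E}Y)^2$ for $m = \Theta(1/\eps)$ requires information about the \emph{distribution} of the $d(F)$'s, not just the aggregate $D$. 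Your $r=2$ estimate gives an upper bound on the number of $G$ intersecting both members of a given disjoint pair, which is not the quantity you need. In short, the paper's ``many useful sets'' structural lemma supplies precisely what the second-moment approach is missing; you should look for and prove something like it rather than relying on counting disjoint pairs in aggregate.
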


The proof of Theorem~\ref{thm:r=2} requires two simple lemmas.
The first one, given below, may be viewed as an analogue of a lemma of~\cite{CDLNS24} for uniform families.
Here, for a collection $\calM$ of pairs of sets, we say that the pairs of $\calM$ are pairwise disjoint if no set lies in more than one pair of $\calM$.

\begin{lemma}\label{lemma:M}
For integers $n$ and $k$ with $n \geq 2k$ and for a real $\eps \in [0,1)$,
if a family $\calF \subseteq \binom{[n]}{k}$ is $\eps$-far from intersecting, then there exists a collection of more than $\frac{\eps}{2} \cdot \binom{n}{k}$ pairwise disjoint pairs $(C,D)$ of sets of $\calF$ satisfying $C \cap D = \emptyset$.
\end{lemma}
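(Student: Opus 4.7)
The plan is to model the structure as a graph and extract a maximal matching. Specifically, consider the auxiliary graph $G$ on vertex set $\calF$ in which two distinct sets $F_1, F_2 \in \calF$ are adjacent if and only if $F_1 \cap F_2 = \emptyset$. A collection of pairwise disjoint pairs of disjoint sets in $\calF$ is precisely a matching in $G$, so the lemma amounts to producing a matching in $G$ of size strictly greater than $\frac{\eps}{2} \cdot \binom{n}{k}$.

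The main idea is to pick any inclusion-maximal matching $M$ in $G$ (which certainly exists, even if $M = \emptyset$, though in that case every edge must already share an endpoint with something non-existent, forcing $G$ edgeless). Let $V(M) \subseteq \calF$ denote the set of $2|M|$ endpoints appearing in edges of $M$. By maximality, every edge of $G$ shares at least one endpoint with some edge of $M$, so no edge of $G$ has both endpoints outside $V(M)$. Equivalently, the sub-family $\calF' = \calF \setminus V(M)$ contains no pair of disjoint sets, meaning $\calF'$ is intersecting. Thus, removing the at most $2|M|$ sets of $V(M)$ from $\calF$ transforms $\calF$ into an intersecting family.

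Now invoke the hypothesis: since $\calF$ is $\eps$-far from intersecting, one cannot make it intersecting by removing at most $\eps \cdot \binom{n}{k}$ of its sets. Consequently, $2|M| > \eps \cdot \binom{n}{k}$, i.e., $|M| > \frac{\eps}{2} \cdot \binom{n}{k}$, as required. Listing the edges of $M$ as pairs $(C,D)$ of disjoint sets in $\calF$ gives the desired collection; the pairs are automatically pairwise disjoint because $M$ is a matching.

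There is no real obstacle here: the argument is a one-line reduction from the graph-theoretic fact that the endpoint set of a maximal matching is a vertex cover. The only subtlety is tracking the strict inequality, which is handed to us directly by the ``more than $\eps \cdot \binom{n}{k}$'' wording in the definition of being $\eps$-far from intersecting.
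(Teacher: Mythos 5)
Your proof is correct and is essentially the same argument as the paper's: the paper also takes a maximal collection of vertex-disjoint disjoint pairs (i.e., a maximal matching in your auxiliary graph), observes that deleting its $2|M|$ endpoints makes $\calF$ intersecting, and concludes $2|M| > \eps \binom{n}{k}$. Your graph-theoretic phrasing is just a more explicit packaging of the identical idea.
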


\begin{proof}
Let $\calF \subseteq \binom{[n]}{k}$ be a family $\eps$-far from intersecting, and consider a maximal collection $\calM$ (with respect to inclusion) of pairwise disjoint pairs $(C,D)$ of sets of $\calF$ satisfying $C \cap D = \emptyset$.
The maximality of $\calM$ implies that $\calF$ can be made intersecting by removing all the $2 \cdot |\calM|$ sets that lie in the pairs of $\calM$.
Since $\calF$ is $\eps$-far from intersecting, it follows that $2 \cdot |\calM| > \eps \cdot \binom{n}{k}$. This completes the proof.
\end{proof}

Using Lemma~\ref{lemma:M}, we obtain the following.

\begin{lemma}\label{lemma:useful}
For integers $n$ and $k$ with $n \geq 2k$ and for a real $\eps \in [0,1)$, let $\calF \subseteq \binom{[n]}{k}$ be a family $\eps$-far from intersecting with $|\calF| > k^2 \cdot \binom{n-2}{k-2}$.
Then, more than $\frac{\eps}{2} \cdot \binom{n}{k}$ of the sets of $\calF$ are disjoint from at least $\frac{1}{2} \cdot ( |\calF| - k^2 \cdot \binom{n-2}{k-2})$ of the sets of $\calF$.
\end{lemma}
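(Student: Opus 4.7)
The plan is to combine Lemma~\ref{lemma:M} with a simple counting argument about the number of $k$-subsets of $[n]$ that intersect two disjoint sets.

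First I would apply Lemma~\ref{lemma:M} to the family $\calF$ (which is $\eps$-far from intersecting) to obtain a collection $\calM$ of more than $\frac{\eps}{2} \cdot \binom{n}{k}$ pairwise disjoint pairs $(C,D)$ of sets of $\calF$ with $C \cap D = \emptyset$. Here ``pairwise disjoint'' means that no single set of $\calF$ appears in more than one pair of $\calM$, so the sets of $\calF$ that appear in pairs of $\calM$ form a collection of more than $\eps \cdot \binom{n}{k}$ distinct sets.

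Next I would establish the key counting bound: for any two disjoint sets $C, D \in \binom{[n]}{k}$, the number of sets in $\binom{[n]}{k}$ that intersect both $C$ and $D$ is at most $k^2 \cdot \binom{n-2}{k-2}$. Indeed, any such set must contain at least one element of $C$ and at least one element of $D$, so picking an element of $C$ (at most $k$ choices), an element of $D$ (at most $k$ choices), and then the remaining $k-2$ elements from $[n]$ (at most $\binom{n-2}{k-2}$ choices) covers all possibilities. Consequently, for each pair $(C,D) \in \calM$, the number of sets of $\calF$ that are disjoint from at least one of $C$ or $D$ is at least $|\calF| - k^2 \cdot \binom{n-2}{k-2}$.

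Now for each pair $(C,D) \in \calM$, a pigeonhole argument yields that at least one of $C$ or $D$ is disjoint from at least $\frac{1}{2} \cdot (|\calF| - k^2 \cdot \binom{n-2}{k-2})$ sets of $\calF$. Call such a set a \emph{representative} of its pair. Since the pairs of $\calM$ are pairwise disjoint as sets, the representatives chosen from different pairs are distinct, and so we obtain more than $\frac{\eps}{2} \cdot \binom{n}{k}$ distinct sets of $\calF$ each of which is disjoint from at least $\frac{1}{2} \cdot (|\calF| - k^2 \cdot \binom{n-2}{k-2})$ sets of $\calF$, as required. The assumption $|\calF| > k^2 \cdot \binom{n-2}{k-2}$ is what guarantees that this lower bound is positive and hence meaningful. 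No real obstacle arises; the only subtlety is interpreting ``pairwise disjoint pairs'' correctly so that one may extract distinct representatives, but this is built into Lemma~\ref{lemma:M}.
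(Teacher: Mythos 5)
Your proof is correct and follows essentially the same approach as the paper: invoke Lemma~\ref{lemma:M} to obtain the collection $\calM$ of pairwise disjoint pairs, bound by $k^2 \cdot \binom{n-2}{k-2}$ the number of $k$-subsets intersecting both members of a disjoint pair, apply pigeonhole within each pair to get a representative, and use the pairwise-disjointness of $\calM$ to conclude that the representatives are distinct.
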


\begin{proof}
Let $\calF \subseteq \binom{[n]}{k}$ be a family $\eps$-far from intersecting.
By Lemma~\ref{lemma:M}, there exists a collection $\calM$ of more than $\frac{\eps}{2} \cdot \binom{n}{k}$ pairwise disjoint pairs $(C,D)$ of sets of $\calF$ satisfying $C \cap D = \emptyset$. Fix a pair $(C,D) \in \calM$. The disjointness of $C$ and $D$ implies that every set in $\binom{[n]}{k}$ that intersects both $C$ and $D$ includes some element of $C$ and a different element of $D$. It follows that the number of sets in $\binom{[n]}{k}$ that intersect both $C$ and $D$ does not exceed $k^2 \cdot \binom{n-2}{k-2}$. Therefore, either $C$ or $D$ is disjoint from at least $\frac{1}{2} \cdot ( |\calF| - k^2 \cdot \binom{n-2}{k-2})$ of the sets of $\calF$.
Since the pairs of $\calM$ are pairwise disjoint, it follows that more than $\frac{\eps}{2} \cdot \binom{n}{k}$ sets of $\calF$ are disjoint from at least $\frac{1}{2} \cdot ( |\calF| - k^2 \cdot \binom{n-2}{k-2})$ of the sets of $\calF$, as required.
\end{proof}

We are ready to prove Theorem~\ref{thm:r=2}.

\begin{proof}[ of Theorem~\ref{thm:r=2}]
For integers $n,k,m$ and for a real $\eps \in [0,1)$, consider the $\textsc{Canonical}$ $\textsc{Tester}~(n,k,m)$ for the $\Inter_{n,k,\eps}$ problem.
As previously mentioned, this tester is non-adaptive and one-sided error.
To establish its correctness, it suffices to prove that if a function $f: \binom{[n]}{k} \rightarrow \{0,1\}$ is $\eps$-far from intersecting, then the $m$ sets picked by the tester include with probability at least $2/3$ two disjoint sets that lie in $f^{-1}(1)$. Indeed, this implies that the tester rejects such an $f$ with the desired probability.

For a real $\eps \geq 2 \cdot (\frac{k^2}{n})^2$, let $f: \binom{[n]}{k} \rightarrow \{0,1\}$ be a function $\eps$-far from intersecting. Consider the family $\calF = f^{-1}(1)$, and note that $\calF$ is $\eps$-far from intersecting. In particular, it holds that $|\calF| > \eps \cdot \binom{n}{k}$, because $\calF$ can be made intersecting by removing all of its sets. This implies that
\begin{eqnarray}\label{eq:useful}
|\calF| - k^2 \cdot \binom{n-2}{k-2} > \eps \cdot \binom{n}{k} - k^2 \cdot \bigg (\frac{k}{n} \bigg )^2 \cdot \binom{n}{k} \geq \frac{\eps}{2} \cdot \binom{n}{k},
\end{eqnarray}
where the second inequality relies on our assumption on $\eps$.
We say that a set of $\binom{[n]}{k}$ is useful if it lies in $\calF$ and is disjoint from at least $\frac{\eps}{4} \cdot \binom{n}{k}$ of the sets of $\calF$.
By Lemma~\ref{lemma:useful}, combined with~\eqref{eq:useful}, the number of useful sets is greater than $\frac{\eps}{2} \cdot \binom{n}{k}$. Hence, a random set picked uniformly from $\binom{[n]}{k}$ is useful with probability at least $\eps/2$.

Now, consider the $m$ sets chosen by $\Sampler$.
The probability that no useful set is picked throughout the first $4/\eps$ choices does not exceed $(1-\eps/2)^{4/\eps} \leq e^{-2} < 1/6$.
Therefore, with probability at least $5/6$, these choices include a useful set $A$.
Since $A$ is useful, it is disjoint from at least $\frac{\eps}{4} \cdot \binom{n}{k}$ of the sets of $\calF$.
Therefore, once such a set $A$ is chosen, the probability that a random set picked uniformly from $\binom{[n]}{k}$ lies in $\calF$ and is disjoint from $A$ is at least $\eps/4$. It follows that the probability that no such set is picked throughout the next $8/\eps$ choices does not exceed $(1-\eps/4)^{8/\eps} \leq e^{-2} < 1/6$.
By the union bound, for $m = 12/\eps$, the $m$ sets picked by our tester include with probability at least $2/3$ two disjoint sets that lie in $\calF$.
Observe that for this choice of $m$, the running time of our tester is polynomial in $n$ and $1/\eps$.
By our assumption on $\eps$, the running time is polynomial in $n$, as desired.
\end{proof}

\subsubsection{General \texorpdfstring{$r$}{r}}\label{sec:GeneralR}

We proceed by analyzing the more nuanced case in which $\eps \geq \Omega( (k^2/n)^r)$ for a general integer $r$.
We prove the following result, which confirms Theorem~\ref{thm:2Intro}.

\begin{theorem}\label{thm:general_r}
For every integer $r \geq 1$, there exist constants $c_1 = c_1(r)$ and $c_2 = c_2(r)$ for which the following holds.
For all integers $n$ and $k$ with $n \geq 2k$ and for any real $\eps \in [0,1)$ with $\eps \geq c_1 \cdot (\frac{k^2}{n})^r$, there exists a non-adaptive one-sided error tester for $\Inter_{n,k,\eps}$ with at most $c_2 \cdot \frac{\ln k}{\eps}$ queries, running time polynomial in $n$, and success probability at least $2/3$.
\end{theorem}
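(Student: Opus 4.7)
The plan is to analyze the $\Sampler(n,k,m)$ algorithm with $m = c_2(r) \cdot \ln k / \eps$ queries, showing that whenever $\calF = f^{-1}(1)$ is $\eps$-far from intersecting, two disjoint members of $\calF$ appear among the $m$ samples with probability at least $2/3$; non-adaptivity and one-sided error are immediate from the tester's design. The driving counting observation is as follows: suppose the samples contain, for every ``prefix'' $A = \{j_1, \ldots, j_i\}$ with $i \le r-1$ drawn from elements of previously sampled sets, a witness $F_A \in \calF$ with $F_A \cap A = \emptyset$. Then any $G \in \binom{[n]}{k}$ that intersects all these $F_A$ must contain an $r$-element chain $j_1 \in F_\emptyset$, $j_2 \in F_{\{j_1\}}$, $\ldots$, $j_r \in F_{\{j_1, \ldots, j_{r-1}\}}$, so the number of such $G$ is at most $k^r \cdot \binom{n-r}{k-r} \le (k^2/n)^r \cdot \binom{n}{k}$. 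Consequently a further random draw from $\binom{[n]}{k}$ lies in $\calF$ and misses some $F_A$ with probability $\Omega(\eps)$, since $|\calF| > \eps \cdot \binom{n}{k}$ and $\eps \ge c_1 \cdot (k^2/n)^r$, yielding the desired disjoint pair after $O(1/\eps)$ additional samples.

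The serious obstacle is that for some small set $A^*$ the subfamily $\{F \in \calF : F \cap A^* = \emptyset\}$ may be so sparse that no witness $F_{A^*}$ is ever sampled; call such an $A^*$ a \emph{capturing} set, following the terminology of~\cite{DinurF09}. To cope with this I would prove the structural lemma (the analog of Lemma~\ref{lemma:ABCinduction}): if $\calF$ is $\eps$-far from intersecting, there exist a set $A \subseteq [n]$ with $|A| \le r(r-1)$ and disjoint subsets $B, C \subseteq A$ such that the slices $\calF_1 = \{F \in \calF : F \cap A = B\}$ and $\calF_2 = \{F \in \calF : F \cap A = C\}$ form an $\eps'$-far-from-cross-intersecting pair with $\eps' = \Omega_r(\eps)$, while no subset of $[n] \setminus A$ of size at most $r-1$ captures $\calF_2$. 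The proof would be by iterative refinement: start with $A = B = C = \emptyset$ and $\calF_1 = \calF_2 = \calF$; at each step, if some $A' \subseteq [n] \setminus A$ captures the current $\calF_2$, then the far-from-cross-intersecting hypothesis lets us (by averaging over the $\le 4^{|A'|}$ ordered pairs of disjoint subsets of $A'$) pick $B', C' \subseteq A'$ for which the further-restricted pair remains $\Omega_r(\eps)$-far from cross-intersecting, losing only a constant factor. Termination in at most $r$ steps is the crucial point: after $r$ refinements $\calF_2$ is confined to sets with a prescribed intersection pattern on a set of size $\ge r$, hence has density at most $(k/n)^r$ in $\binom{[n]}{k}$, contradicting the size lower bound forced by the preserved distance $\Omega_r(\eps) \gg (k^2/n)^r$ that the choice of $c_1(r)$ enforces.

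With the structural lemma in hand, the final sampling argument (the analog of Lemma~\ref{lemma:samples}) proceeds as follows. Since no subset of $[n] \setminus A$ of size at most $r-1$ captures $\calF_2$, every candidate prefix $\{j_1, \ldots, j_i\} \subseteq [n] \setminus A$ with $i \le r-1$ admits a relative density $\Omega_r(1)$ of witnesses in $\calF_2$. The only prefixes that matter are those built from elements of previously sampled $\calF_2$-sets, and there are at most $k^{r-1}$ of them. Processing depth by depth and using Chernoff-plus-union-bound over these $k^{r-1}$ prefixes, $O(\ln k / \eps)$ samples suffice to realize the full family of witnesses with probability at least $5/6$; applying the driving counting observation from paragraph one, with $\calF_2$ supplying the witnesses and $\calF_1$ supplying the ``killing'' draw, shows that $O(1/\eps)$ further samples contain a set of $\calF_1$ disjoint from one of the sampled witnesses in $\calF_2$ with probability at least $5/6$. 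A union bound delivers the $2/3$ success probability, and the running time is polynomial in $n$ because the tester only inspects $m = O(\ln k / \eps)$ queries and performs pairwise disjointness checks on them.

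The heart of the argument, and the main obstacle, is the structural lemma: one has to carefully choose the capturing threshold, track how much distance from cross-intersectingness is preserved through each iterative refinement, and establish the sharp ``termination in at most $r$ steps'' bound, which is precisely what couples the constant $c_1(r)$ in the hypothesis $\eps \ge c_1(r) \cdot (k^2/n)^r$ to the induction depth $r$. All subsequent steps are relatively standard applications of Chernoff and union bounds once the structural lemma supplies the right notion of ``good'' restriction of $\calF$.
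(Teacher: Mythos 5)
Your proposal is correct and takes essentially the same route as the paper: the canonical tester, the $k^r\binom{n-r}{k-r}$ chain-counting observation, the capturing obstruction, an inductive structural lemma refining a far-from-cross-intersecting pair by restricting to disjoint slices of a capturing set, termination in $r$ steps via the density bound $\binom{n-t}{k-t}/\binom{n}{k}\le (k/n)^t$, and a depth-by-depth sampling argument with union bound over $O(k^{r-1})$ prefixes. The small deviations (counting $4^{|A'|}$ rather than $3^{|A'|}$ ordered disjoint pairs, and phrasing the witness density as ``$\Omega_r(1)$'' when you mean density at least the capturing threshold $\Omega_r(\eps)$) are cosmetic and do not affect correctness.
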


The proof of Theorem~\ref{thm:general_r} requires a few lemmas, which involve the following definition.

\begin{definition}\label{def:rebellious}
For integers $n$ and $k$ with $n \geq 2k$, a family $\calF \subseteq \binom{[n]}{k}$, a set $A \subseteq [n]$, and a set $B \subseteq A$, we let $\calF(A_{\downarrow B})$ denote the family of sets of $\calF$ whose intersection with $A$ is $B$, that is,
\[\calF(A_{\downarrow B}) = \{ F \in \calF \mid F \cap A = B \}.\]
We say that the set $A$ $\eps$-captures $\calF$ if the number of sets of $\calF$ that are disjoint from $A$ is smaller than $\eps \cdot \binom{n}{k}$, equivalently, $|\calF(A_{\downarrow \emptyset})| < \eps \cdot \binom{n}{k}$.
\end{definition}

The following lemma shows that if two families are far from cross-intersecting, then for every small set $A$, it is possible to restrict the families to disjoint intersections with $A$, so that the obtained restrictions are still far from cross-intersecting.

\begin{lemma}\label{lemma:B12}
For an integer $r \geq 0$, integers $n$ and $k$ with $n \geq 2k$, and a real $\eps \in [0,1)$, let $\calF_1,\calF_2 \subseteq \binom{[n]}{k}$ be two families such that the pair $(\calF_1,\calF_2)$ is $\eps$-far from cross-intersecting, and let $A \subseteq [n]$ be a set of size $|A| \leq r$.
Then, there exist two sets $B,C \subseteq A$ with $B \cap C = \emptyset$ for which the pair $(\calF_1(A_{\downarrow B}), \calF_2(A_{\downarrow C}))$ is $\frac{\eps}{3^r}$-far from cross-intersecting.
Moreover, if $A$ $\frac{\eps}{3^r}$-captures $\calF_2$, then the guaranteed set $C$ is not empty.
\end{lemma}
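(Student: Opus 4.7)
The plan is to prove both parts of the lemma by a single counting/contrapositive argument that decomposes $\calF_1$ and $\calF_2$ according to their sets' intersections with $A$. The key observation is that every pair $(F_1,F_2) \in \calF_1 \times \calF_2$ belongs to exactly one ``block'' $\calF_1(A_{\downarrow B}) \times \calF_2(A_{\downarrow C})$, namely the one indexed by $B = F_1 \cap A$ and $C = F_2 \cap A$. Whenever $B \cap C \neq \emptyset$, the pair automatically satisfies $F_1 \cap F_2 \neq \emptyset$, so the only blocks that can harbor witnesses of non-cross-intersectingness are those indexed by disjoint pairs $(B,C) \subseteq A \times A$. The number of such disjoint pairs is $3^{|A|} \leq 3^r$ (each element of $A$ is either in $B$, in $C$, or in neither).

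For the first (main) part, I will argue by contradiction: assume that for every disjoint pair $(B,C)$ the pair $(\calF_1(A_{\downarrow B}), \calF_2(A_{\downarrow C}))$ is $\eps/3^r$-close to cross-intersecting, and pick for each such block a ``fix'' set $R^{B,C} \subseteq \calF_1(A_{\downarrow B}) \cup \calF_2(A_{\downarrow C})$ of size at most $(\eps/3^r) \cdot \binom{n}{k}$ whose removal makes that block cross-intersecting. Setting $R = \bigcup_{(B,C)} R^{B,C}$, I will verify that $(\calF_1 \setminus R, \calF_2 \setminus R)$ is cross-intersecting: for any surviving pair $F_1, F_2$, either their intersections $B, C$ with $A$ overlap and we are done, or $B \cap C = \emptyset$ and neither $F_1$ nor $F_2$ was removed from that block, in which case cross-intersectingness holds by construction. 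The union bound then yields $|R| \leq 3^r \cdot (\eps/3^r) \binom{n}{k} = \eps \binom{n}{k}$, contradicting the assumption that $(\calF_1,\calF_2)$ is $\eps$-far from cross-intersecting.

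For the ``moreover'' part, the same scheme is sharpened using the hypothesis $|\calF_2(A_{\downarrow \emptyset})| < (\eps/3^r)\binom{n}{k}$. Suppose toward contradiction that every disjoint pair $(B,C)$ with $C \neq \emptyset$ is $\eps/3^r$-close. I will handle all blocks with $C = \emptyset$ uniformly by simply removing the entire (small) family $\calF_2(A_{\downarrow \emptyset})$, which costs strictly fewer than $(\eps/3^r) \binom{n}{k}$ removals and instantly makes all those blocks cross-intersecting (empty right coordinate). Combining this with the $\eps/3^r$-close fixes for the remaining $3^{|A|} - 2^{|A|}$ disjoint pairs (those with $C \neq \emptyset$) yields a total removal budget strictly less than $(3^{|A|} - 2^{|A|} + 1) \cdot (\eps/3^r) \binom{n}{k}$, which is bounded by $\eps \binom{n}{k}$ since $2^{|A|} \geq 1$ and $|A| \leq r$. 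The same block-wise verification as before shows the result is cross-intersecting, again contradicting $\eps$-farness.

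I do not foresee a serious obstacle here. The only point requiring minor care is the bookkeeping: the sets $R^{B,C}$ may overlap across different blocks, but the union bound suffices because each set $F \in \calF_1$ (respectively, $\calF_2$) belongs to exactly one block coordinate on its side, so membership in $R$ is what matters, not multiplicity. The strict inequality needed in the ``moreover'' step is delivered by the strict bound on $|\calF_2(A_{\downarrow \emptyset})|$ coming from the $\eps/3^r$-capturing hypothesis, which is precisely what forces the selected $C$ to be nonempty.
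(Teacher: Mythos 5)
Your proof of the main part is correct and takes essentially the same approach as the paper: decompose into blocks $\calF_1(A_{\downarrow B}) \times \calF_2(A_{\downarrow C})$, note that only the $3^{|A|} \leq 3^r$ blocks with $B \cap C = \emptyset$ can harbor violations, argue by contradiction that removing the union of $\eps/3^r$-small fix sets would make $(\calF_1, \calF_2)$ cross-intersecting with a total cost at most $\eps\binom{n}{k}$. (A minor stylistic point: you keep $|A|$ general and use $3^{|A|} \leq 3^r$, whereas the paper first reduces to $|A|=r$; your version is arguably cleaner.)

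For the ``moreover'' part your argument is valid but takes a different, more roundabout route than the paper. You rerun the counting argument restricted to pairs with $C \neq \emptyset$: remove all of $\calF_2(A_{\downarrow\emptyset})$ (small by the capture hypothesis) plus the fixes for the remaining $3^{|A|} - 2^{|A|}$ blocks, and total the budget. The paper instead observes directly that any block $(\calF_1(A_{\downarrow B}), \calF_2(A_{\downarrow C}))$ that is $\eps/3^r$-far from cross-intersecting must have $|\calF_2(A_{\downarrow C})| > \frac{\eps}{3^r}\binom{n}{k}$ (since deleting all of $\calF_2(A_{\downarrow C})$ trivially makes it cross-intersecting), while the capture hypothesis says $|\calF_2(A_{\downarrow\emptyset})| < \frac{\eps}{3^r}\binom{n}{k}$; hence $C \neq \emptyset$ by comparing sizes. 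The paper's argument is a one-liner that reuses the $(B,C)$ already produced by the main part, whereas yours requires a second counting pass; both are correct, but the paper's is the more economical of the two.
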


\begin{proof}
Let $\calF_1, \calF_2 \subseteq \binom{[n]}{k}$ and $A \subseteq [n]$ be as in the statement of the lemma.
It suffices to prove the lemma for the case where $|A|=r$, because if $|A| = r' < r$, then one can apply the lemma with $r$ being $r'$ to obtain a stronger statement.

Assuming that $|A|=r$, the number of pairs $(B,C)$ of disjoint subsets of $A$ is $3^r$.
Suppose for contradiction that for every such pair $(B,C)$, the pair $(\calF_1(A_{\downarrow B}), \calF_2(A_{\downarrow C}))$ is $\frac{\eps}{3^r}$-close to cross-intersecting, that is, it can be made cross-intersecting by removing at most $\frac{\eps}{3^r} \cdot \binom{n}{k}$ sets from $\calF_1(A_{\downarrow B})$ and $\calF_2(A_{\downarrow C})$. This implies that by removing at most $\eps \cdot \binom{n}{k}$ sets from $\calF_1$ and $\calF_2$, it is possible to obtain two sub-families $\calF'_1 \subseteq \calF_1$ and $\calF'_2 \subseteq \calF_2$, such that for every two disjoint (possibly empty) sets $B,C \subseteq A$, the pair $(\calF'_1(A_{\downarrow B}), \calF'_2(A_{\downarrow C}))$ is cross-intersecting. Since for every non-disjoint sets $B,C \subseteq A$, the pair $(\calF'_1(A_{\downarrow B}), \calF'_2(A_{\downarrow C}))$ is also cross-intersecting (including the case $B=C \neq \emptyset$), it follows that the pair $(\calF'_1,\calF'_2)$ is cross-intersecting. This contradicts the assumption that $(\calF_1,\calF_2)$ is $\eps$-far from cross-intersecting.

Now, let $(B,C)$ be a pair of disjoint subsets of $A$ for which $(\calF_1(A_{\downarrow B}), \calF_2(A_{\downarrow C}))$ is $\frac{\eps}{3^r}$-far from cross-intersecting.
The distance from cross-intersecting yields that $|\calF_2(A_{\downarrow C})| > \frac{\eps}{3^r} \cdot \binom{n}{k}$, because the pair can be made cross-intersecting by removing all the sets of $\calF_2(A_{\downarrow C})$. Therefore, when $A$ $\frac{\eps}{3^r}$-captures $\calF_2$, the set $C$ is not empty, and we are done.
\end{proof}

\begin{remark}
In the proof of Lemma~\ref{lemma:B12}, it suffices to consider the unordered pairs of disjoint subsets of $A$, whose number is $\tfrac{1}{2} \cdot (3^r+1)$. We use the weaker bound $3^r$ for simplicity of presentation.
\end{remark}

As a consequence of Lemma~\ref{lemma:B12}, we obtain the following.

\begin{lemma}\label{lemma:ABCinduction}
For integers $r,t \geq 0$, integers $n$ and $k$ with $n \geq 2k$, and a real $\eps \in [0,1)$, let $\calF_1,\calF_2 \subseteq \binom{[n]}{k}$ be families such that the pair $(\calF_1,\calF_2)$ is $\eps$-far from cross-intersecting.
Then, there exist sets $A \subseteq [n]$ and $B, C \subseteq A$ with $B \cap C = \emptyset$, such that
\begin{enumerate}
  \item\label{itm:1} the pair $(\calF_1(A_{\downarrow B}),\calF_2(A_{\downarrow C}))$ is $\frac{\eps}{3^{r \cdot t}}$-far from cross-intersecting, and
  \item\label{itm:2} either there is no subset of $[n] \setminus A$ of size at most $r$ that $\frac{\eps}{3^{r \cdot t}}$-captures $\calF_2(A_{\downarrow C})$, or $|C| \geq t$.
\end{enumerate}
Moreover, if $\eps \geq 3^{r\cdot t} \cdot (\frac{k}{n})^t$, then the guaranteed set $C$ satisfies $|C| < t$.\footnote{Note that by combining the second item of the lemma with the `moreover' part, it follows that if $\eps \geq 3^{r\cdot t} \cdot (\frac{k}{n})^t$, then there is no subset of $[n] \setminus A$ of size at most $r$ that $\frac{\eps}{3^{r \cdot t}}$-captures $\calF_2(A_{\downarrow C})$.}
\end{lemma}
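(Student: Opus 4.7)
The plan is to prove the statement by induction on $t \geq 0$, with $r$, $n$, $k$, $\eps$, and the families $\calF_1, \calF_2$ held fixed throughout. The base case $t = 0$ is immediate upon taking $A = B = C = \emptyset$: then $\calF_i(A_{\downarrow \cdot}) = \calF_i$, so Item~\ref{itm:1} is exactly the hypothesis, and Item~\ref{itm:2} follows from $|C| = 0 \geq t$.

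For the inductive step, I would assume the statement at level $t$ and apply it to obtain sets $A, B, C$ satisfying the conclusion there. Two easy cases resolve at once. First, if $|C| \geq t+1$, the same sets serve at level $t+1$, since the farness threshold in Item~\ref{itm:1} only weakens on passing from $\eps/3^{rt}$ to $\eps/3^{r(t+1)}$. Second, if no subset $A' \subseteq [n] \setminus A$ of size at most $r$ satisfies $|\calF_2(A_{\downarrow C})(A'_{\downarrow \emptyset})| < (\eps/3^{r(t+1)}) \cdot \binom{n}{k}$, the same sets work via the first clause of Item~\ref{itm:2} at level $t+1$. I note that the first clause of Item~\ref{itm:2} at level $t$ lands automatically in this second easy case, because a lower bound at threshold $\eps/3^{rt}$ implies one at the smaller threshold $\eps/3^{r(t+1)}$.

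The remaining case is that $|C| = t$ (forced by the failure of both easy cases, using the inductive hypothesis) and some $A' \subseteq [n] \setminus A$ with $|A'| \leq r$ satisfies $|\calF_2(A_{\downarrow C})(A'_{\downarrow \emptyset})| < (\eps/3^{r(t+1)}) \cdot \binom{n}{k}$, i.e., $A'$ $(\eps/3^{rt})/3^r$-captures $\calF_2(A_{\downarrow C})$. Here I would apply Lemma~\ref{lemma:B12} to the pair $(\calF_1(A_{\downarrow B}), \calF_2(A_{\downarrow C}))$, which is $\eps/3^{rt}$-far from cross-intersecting, with this $A'$ in the role of the small set. Since $A'$ captures at precisely the right level, the `moreover' clause of Lemma~\ref{lemma:B12} furnishes disjoint $B', C' \subseteq A'$ with $C' \neq \emptyset$ such that the restricted pair $(\calF_1(A_{\downarrow B})(A'_{\downarrow B'}), \calF_2(A_{\downarrow C})(A'_{\downarrow C'}))$ is $\eps/3^{r(t+1)}$-far from cross-intersecting. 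I would then set $\tilde A = A \cup A'$, $\tilde B = B \cup B'$, $\tilde C = C \cup C'$. Since $A \cap A' = \emptyset$, prescribing the intersection with $A$ and with $A'$ jointly is equivalent to prescribing the intersection with $\tilde A$, so $\calF_i(\tilde A_{\downarrow \tilde B}) = \calF_i(A_{\downarrow B})(A'_{\downarrow B'})$ and analogously for $\tilde C$, which delivers Item~\ref{itm:1} at level $t+1$. Disjointness $\tilde B \cap \tilde C = \emptyset$ follows from $B \cap C = B' \cap C' = \emptyset$ together with $A \cap A' = \emptyset$, and $|\tilde C| = |C| + |C'| \geq t+1$ supplies the second clause of Item~\ref{itm:2}. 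The main obstacle I expect is bookkeeping: keeping the thresholds aligned so that the `moreover' clause of Lemma~\ref{lemma:B12} applies, since that is the sole mechanism by which the induction can genuinely grow $|C|$.

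For the `moreover' conclusion, I would argue by contradiction. If $|C| \geq t$ while $\eps \geq 3^{rt}(k/n)^t$, then every set in $\calF_2(A_{\downarrow C})$ contains the $|C|$-element set $C$, giving $|\calF_2(A_{\downarrow C})| \leq \binom{n-|C|}{k-|C|} \leq (k/n)^{|C|} \binom{n}{k} \leq (k/n)^t \binom{n}{k}$, where the last inequality uses $k/n \leq 1/2$ to drop the exponent from $|C|$ down to $t$. Deleting all of $\calF_2(A_{\downarrow C})$ trivially makes the pair cross-intersecting, contradicting the $\eps/3^{rt}$-farness in Item~\ref{itm:1} under the assumed lower bound on $\eps$. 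Hence $|C| < t$, as claimed.
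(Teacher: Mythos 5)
Your proof is correct and follows essentially the same strategy as the paper's: induction on $t$ (you index from $t$ to $t+1$, the paper from $t-1$ to $t$, a purely cosmetic difference), with the same base case, the same case split in the inductive step, the same invocation of Lemma~\ref{lemma:B12} together with its `moreover' clause to grow $|C|$, and the same volume bound $|\calF_2(A_{\downarrow C})| \leq \binom{n-|C|}{k-|C|} \leq (k/n)^t\binom{n}{k}$ combined with Item~\ref{itm:1} to establish the `moreover' part.
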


\begin{proof}
Fix an integer $r \geq 0$.
We start by proving the first part of the lemma, i.e., the existence of sets $A \subseteq [n]$ and $B, C \subseteq A$ with $B \cap C = \emptyset$ satisfying Items~\ref{itm:1} and~\ref{itm:2}. To do so, we apply induction on $t$.
The result clearly holds for $t=0$, as follows from the choice $A = B = C = \emptyset$ (for which we have $\calF_1 = \calF_1(A_{\downarrow B})$, $\calF_2 = \calF_2(A_{\downarrow C})$, and $|C| \geq 0$).

Now, take $t \geq 1$, and assume that the result holds for $t-1$.
To prove it for $t$, let $\calF_1,\calF_2 \subseteq \binom{[n]}{k}$ be families such that the pair $(\calF_1,\calF_2)$ is $\eps$-far from cross-intersecting.
By the induction hypothesis, there exist sets $A \subseteq [n]$ and $B, C \subseteq A$ with $B \cap C = \emptyset$, such that the families $\calH_1 = \calF_1(A_{\downarrow B})$ and $ \calH_2 = \calF_2(A_{\downarrow C})$ satisfy that
\begin{enumerate}
  \item\label{itm:11} the pair $(\calH_1,\calH_2)$ is $\frac{\eps}{3^{r \cdot (t-1)}}$-far from cross-intersecting, and
  \item\label{itm:22} either there is no subset of $[n] \setminus A$ of size at most $r$ that $\frac{\eps}{3^{r \cdot (t-1)}}$-captures $\calH_2$, or $|C| \geq t-1$.
\end{enumerate}
By $\frac{\eps}{3^{r \cdot t}} \leq \frac{\eps}{3^{r \cdot (t-1)}}$, it follows from Item~\ref{itm:11} that the pair $(\calH_1,\calH_2)$ is $\frac{\eps}{3^{r \cdot t}}$-far from cross-intersecting. Therefore, if there is no subset of $[n] \setminus A$ of size at most $r$ that $\frac{\eps}{3^{r \cdot t}}$-captures $\calH_2$, then the result holds for $t$ with the same sets $A,B,C$, and we are done. Otherwise, there exists a set $A' \subseteq [n] \setminus A$ of size at most $r$ that $\frac{\eps}{3^{r \cdot t}}$-captures $\calH_2$. In particular, there exists a subset of $[n] \setminus A$ of size at most $r$ that $\frac{\eps}{3^{r \cdot (t-1)}}$-captures $\calH_2$, hence by Item~\ref{itm:22} above, it follows that $|C| \geq t-1$.

We apply Lemma~\ref{lemma:B12} with the pair $(\calH_1, \calH_2)$, which is $\frac{\eps}{3^{r \cdot (t-1)}}$-far from cross-intersecting, and with the set $A'$.
Letting $B',C' \subseteq A'$ be the sets with $B' \cap C' = \emptyset$ guaranteed by the lemma, it follows that the pair $(\calH_1(A'_{\downarrow B'}),\calH_2(A'_{\downarrow C'}))$ is $\tilde{\eps}$-far from cross-intersecting for
$\tilde{\eps} = \frac{1}{3^r} \cdot \frac{1}{3^{r \cdot (t-1)}} = \frac{1}{3^{r \cdot t}}$.
Moreover, since $A'$ $\tilde{\eps}$-captures $\calH_2$, it follows from the lemma that $C'$ is not empty.

To complete the argument, define $A'' = A \cup A'$, $B'' = B \cup B'$, and $C'' = C \cup C'$.
Recalling that $A \cap A' = \emptyset$, $B \cap C = \emptyset$, and $B' \cap C' = \emptyset$, we obtain that $B'' \cap C'' = \emptyset$ and that
\[|C''| = |C|+|C'| \geq (t-1)+1=t.\]
We further have $\calH_1(A'_{\downarrow B'}) = \calF_1(A''_{\downarrow B''})$ and $\calH_2(A'_{\downarrow C'}) = \calF_2(A''_{\downarrow C''})$.
Hence, the sets $A'',B'',C''$ satisfy the required properties with respect to $\calF_1,\calF_2$ and $t$. This completes the proof of the first part of the lemma.

We finally show that if $\eps \geq 3^{r\cdot t} \cdot (\frac{k}{n})^t$, then the guaranteed set $C$ satisfies $|C| < t$.
To establish the contrapositive statement, let $A,B,C$ be the sets guaranteed by the lemma for two families $\calF_1,\calF_2 \subseteq \binom{[n]}{k}$, and suppose that $|C| \geq t$.
Since the pair $(\calF_1(A_{\downarrow B}),\calF_2(A_{\downarrow C}))$ is $\frac{\eps}{3^{r \cdot t}}$-far from cross-intersecting, it follows that $| \calF_2(A_{\downarrow C}) | > \frac{\eps}{3^{r \cdot t}} \cdot \binom{n}{k}$.
On the other hand, each set of $\calF_2(A_{\downarrow C})$ contains the set $C$, hence $|\calF_2(A_{\downarrow C})| \leq \binom{n-t}{k-t} \leq (\frac{k}{n})^t \cdot \binom{n}{k}$.
By combining the two inequalities, we obtain that $\eps < 3^{r\cdot t} \cdot (\frac{k}{n})^t$. This completes the proof.
\end{proof}

The following lemma constitutes a key ingredient in our analysis of the canonical tester.

\begin{lemma}\label{lemma:samples}
For every integer $r \geq 1$, there exists a constant $c = c(r)$, such that for all integers $n$ and $k$ with $n \geq 2k$ and for any real $\eps \in [0,1)$ with $\eps \geq 2 \cdot (\frac{k^2}{n})^r$, the following holds.
Let $\calF \subseteq \binom{[n]}{k}$ be a family, and suppose that there exist sets $A \subseteq [n]$ and $B, C \subseteq A$ with $B \cap C = \emptyset$, such that the pair $(\calF(A_{\downarrow B}),\calF(A_{\downarrow C}))$ is $\eps$-far from cross-intersecting, and there is no subset of $[n] \setminus A$ of size at most $r-1$ that $\eps$-captures $\calF(A_{\downarrow C})$.
Then, if at least $c \cdot \frac{\ln k}{\eps}$ sets are chosen uniformly and independently at random from $\binom{[n]}{k}$, then the probability that they include two disjoint sets that lie in $\calF$ is at least $2/3$.
\end{lemma}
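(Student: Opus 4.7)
Write $\calF_1 = \calF(A_{\downarrow B})$ and $\calF_2 = \calF(A_{\downarrow C})$. Since $B \cap C = \emptyset$, any $F \in \calF_1$ and $F' \in \calF_2$ satisfy $F \cap F' \subseteq [n]\setminus A$, so exhibiting two disjoint sampled sets in $\calF$ reduces to finding one sample in $\calF_1$ and one in $\calF_2$ that are disjoint in $[n]\setminus A$. I would split the $m = c\cdot\frac{\ln k}{\eps}$ random samples into a Phase~1 block of size $m_1 = \Theta(\frac{\ln k}{\eps})$ and a Phase~2 block of size $m_2 = \Theta(\frac{1}{\eps})$: Phase~1 will construct, inside its own samples, a small collection $\calT \subseteq \calF_2$ that is hard to hit outside $A$, and Phase~2 will, with high probability, produce a set of $\calF_1$ disjoint from some member of $\calT$.

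The collection $\calT$ will be organized as a rooted tree of depth at most $r-1$ whose nodes are sequences $(j_1,\ldots,j_i)$ with $0 \le i \le r-1$ satisfying $j_{\ell+1}\in F_{(j_1,\ldots,j_\ell)}\setminus A$ for every $\ell < i$, where each $F_{(j_1,\ldots,j_i)}$ is a Phase~1 sample lying in $\calF_2$ and disjoint from $\{j_1,\ldots,j_i\}$. The no-capture hypothesis gives $|\calF_2(S_{\downarrow\emptyset})| \ge \eps\binom{n}{k}$ for every $S \subseteq [n]\setminus A$ with $|S| \le r-1$, so a single uniform sample meets the requirement for a prescribed $S$ with probability at least $\eps$. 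The key counting step, as previewed in the introduction, is that any $F \in \binom{[n]}{k}$ that intersects $F_S \setminus A$ for every node $S$ of $\calT$ must contain distinct elements $j_1,\ldots,j_r$ with $j_\ell \in F_{(j_1,\ldots,j_{\ell-1})}\setminus A$, so the number of such $F$ is at most $k^r \binom{n-r}{k-r} \le (k^2/n)^r \binom{n}{k}$. Combined with $|\calF_1| > \eps\binom{n}{k}$ (a direct consequence of $\eps$-farness from cross-intersectingness, since removing all of $\calF_1$ vacuously achieves the property) and the hypothesis $\eps \ge 2(k^2/n)^r$, at least $\tfrac{\eps}{2}\binom{n}{k}$ sets of $\calF_1$ are disjoint from some member of $\calT$ in $[n]\setminus A$, and hence $m_2 = O(1/\eps)$ Phase~2 samples catch one with probability at least $5/6$.

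The remaining subtlety is that the collection of sets $S$ requiring a witness $F_S$ in Phase~1 depends on the earlier levels of the tree, which rules out a direct union bound over all $S \subseteq [n]\setminus A$ of size at most $r-1$. I would circumvent this by dedicating one of $r$ equal sub-blocks of the Phase~1 samples to each level of $\calT$. Conditional on the first $\ell$ levels being successfully populated, the $(\ell{+}1)$-th level requires $F_S$ only for at most $k^\ell \le k^{r-1}$ already-revealed sequences $S$; since the $(\ell{+}1)$-th sub-block is independent of the earlier sub-blocks, the probability that a specific such $S$ is missed is at most $(1-\eps)^{m_1/r} \le e^{-m_1\eps/r}$. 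Chaining the contributions of the $r$ levels gives total Phase~1 failure probability at most $r \cdot k^{r-1} \cdot e^{-m_1\eps/r}$, which is $\le 1/6$ once $m_1 = \Theta(\frac{\ln k}{\eps})$ with a constant depending on $r$. A final union bound over the two phases then yields the claimed success probability at least $2/3$.
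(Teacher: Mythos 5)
Your proposal is correct and follows essentially the same approach as the paper's proof: build a depth-$(r-1)$ tree of witness sets in $\calF_2$ using $\Theta(\frac{\ln k}{\eps})$ Phase~1 samples split across $r$ levels, invoke the $k^r\binom{n-r}{k-r}$ counting bound plus $|\calF_1|>\eps\binom{n}{k}$ to get $\frac{\eps}{2}\binom{n}{k}$ "useful" sets in $\calF_1$, and catch one with $O(1/\eps)$ Phase~2 samples. The only cosmetic difference is that the paper allocates slightly growing block sizes $m_i = \ln(6rk^i)/\eps$ to level $i$ rather than equal sub-blocks $m_1/r$, which yields the same asymptotics; your explicit remark on why sub-blocks are needed to avoid an invalid union bound over adaptively determined sets $S$ is made implicitly in the paper via the sequential "throughout the next $m_i$ choices" phrasing.
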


\begin{proof}
Consider a family $\calF$ and sets $A,B,C$ as in the statement of the lemma.
Put $\calF_1 = \calF(A_{\downarrow B})$ and $\calF_2 = \calF(A_{\downarrow C})$.
By assumption, the pair $(\calF_1,\calF_2)$ is $\eps$-far from cross-intersecting, and there is no subset of $[n] \setminus A$ of size at most $r-1$ that $\eps$-captures $\calF_2$.

Suppose that we pick sets uniformly and independently at random from $\binom{[n]}{k}$.
Since the empty set does not $\eps$-capture $\calF_2$, the probability that a random set from $\binom{[n]}{k}$ lies in $\calF_2$ is at least $\eps$.
Hence, the probability that no set of $\calF_2$ is chosen throughout the first $m_0 = \ln(6r)/\eps$ choices is at most
\[(1-\eps)^{m_0} \leq e^{-m_0 \cdot \eps} \leq \tfrac{1}{6r}.\]
Once a set $F \in \calF_2$ is chosen, for each $j_1 \in F \setminus A$, we are interested in a set $F_{j_1}$ that lies in $\calF_2$ and satisfies $F_{j_1} \cap \{j_1\} = \emptyset$. For $r \geq 2$, our assumptions imply that the set $\{j_1\}$ does not $\eps$-capture $\calF_2$, hence the probability that a random set from $\binom{[n]}{k}$ satisfies these conditions is at least $\eps$. Thus, by the union bound, the probability that for some $j_1 \in F \setminus A$, no set that lies in $\calF_2$ and is disjoint from $\{j_1\}$ is chosen throughout the next $m_1 = \ln(6rk)/\eps$ choices is at most
\[ k \cdot (1-\eps)^{m_1} \leq k \cdot e^{-m_1 \cdot \eps} \leq \tfrac{1}{6r}.\]
Once we are given sets $F_{j_1}$ as above for all $j_1 \in F \setminus A$, for every pair $(j_1,j_2) \in (F \setminus A) \times (F_{j_1} \setminus A)$, we are interested in a set $F_{j_1,j_2}$ that lies in $\calF_2$ and satisfies $F_{j_1,j_2} \cap \{j_1,j_2\} = \emptyset$. As before, for $r \geq 3$, the set $\{j_1,j_2\}$ does not $\eps$-capture $\calF_2$, hence the probability that a random set from $\binom{[n]}{k}$ satisfies these conditions is at least $\eps$. Thus, by the union bound, the probability that for some pair $(j_1,j_2) \in (F \setminus A) \times (F_{j_1} \setminus A)$, no set that lies in $\calF_2$ and is disjoint from $\{j_1,j_2\}$ is chosen throughout the next $m_2 = \ln(6rk^2)/\eps$ choices is at most
\[ k^2 \cdot (1-\eps)^{m_2} \leq k^2 \cdot e^{-m_2 \cdot \eps} \leq \tfrac{1}{6r}.\]
Proceeding in this way, in the $i$th iteration with $i \leq r-1$, for every $i$-tuple
\[(j_1,\ldots, j_i) \in (F \setminus A) \times (F_{j_1} \setminus A) \times (F_{j_1,j_2} \setminus A) \times \cdots \times (F_{j_1, \ldots, j_{i-1}} \setminus A),\]
we are interested in a set $F_{j_1, \ldots, j_i}$ that lies in $\calF_2$ and satisfies $F_{j_1, \ldots, j_i} \cap \{j_1, \ldots, j_{i}\} = \emptyset$.
Since the set $\{j_1, \ldots, j_{i}\}$ does not $\eps$-capture $\calF_2$, the probability that a random set from $\binom{[n]}{k}$ satisfies these conditions is at least $\eps$. Hence, by the union bound, the probability that for some $i$-tuple $(j_1,\ldots, j_i)$ as above, no set that lies in $\calF_2$ and is disjoint from $\{j_1, \ldots, j_{i}\}$ is chosen throughout the next $m_i = \ln(6rk^{i})/\eps$ choices is at most
\[ k^{i} \cdot (1-\eps)^{m_i} \leq k^{i} \cdot e^{-m_i \cdot \eps} \leq \tfrac{1}{6r}.\]
To conclude this process, let $m' = \sum_{i=0}^{r-1}{m_i} \leq r \cdot \frac{\ln (6rk^{r-1})}{\eps}$, and apply the union bound over the above $r$ iterations to obtain that throughout the first $m'$ choices of random sets from $\binom{[n]}{k}$, the probability to pick all the desired sets described above is at least $1-r \cdot \frac{1}{6r} = 5/6$. We let $\calR$ denote the collection of those sets.

We next show, for $\eps \geq 2 \cdot (\frac{k^2}{n})^r$, that the number of sets in $\calF_1$ that are disjoint from at least one set of $\calR$ is at least $\frac{\eps}{2} \cdot \binom{n}{k}$.
Indeed, by $B \cap C = \emptyset$, the sets of $\calF_1$ do not intersect the sets of $\calF_2$ at the elements of $A$.
Therefore, every set of $\calF_1$ that intersects all the sets of $\calR$ must include some element $j_1 \in F \setminus A$, some element $j_2 \in F_{j_1} \setminus A$, some element $j_3 \in F_{j_1,j_2} \setminus A$, and so on, up to an element $j_r \in F_{j_1, \ldots, j_{r-1}} \setminus A$. By the definition of those sets, the $r$ elements $j_1, \ldots, j_r$ are distinct. It therefore follows that there exists a collection of at most $k^r$ subsets of $[n] \setminus A$ of size $r$, such that every set of $\calF_1$ that intersects all the sets of $\calR$ contains at least one set of the collection. This implies that the number of those sets of $\calF_1$ does not exceed $k^r \cdot \binom{n-r}{k-r}$.
Since the pair $(\calF_1,\calF_2)$ is $\eps$-far from cross-intersecting, it follows that $|\calF_1| > \eps \cdot \binom{n}{k}$.
This implies that the number of sets in $\calF_1$ that are disjoint from at least one set of $\calR$ is at least
\[ |\calF_1| - k^r \cdot \binom{n-r}{k-r} > \eps \cdot \binom{n}{k} - k^r \cdot \binom{n-r}{k-r} \geq \eps \cdot \binom{n}{k} - k^r \cdot \bigg (\frac{k}{n} \bigg )^r \cdot \binom{n}{k} \geq \frac{\eps}{2} \cdot \binom{n}{k}, \]
where the last inequality relies on the assumption $\eps \geq 2 \cdot (\frac{k^2}{n})^r$.

Now, conditioned on the event that the sets chosen so far include the sets of a collection $\calR$ as above, we consider the choice of additional $4/\eps$ random sets from $\binom{[n]}{k}$.
The discussion from the previous paragraph yields that a random set chosen uniformly from $\binom{[n]}{k}$ lies in $\calF_1$ and is disjoint from at least one set of $\calR$ with probability at least $\eps/2$.
This implies that the probability that no such set is picked throughout the next $4/\eps$ choices is at most $(1-\eps/2)^{4/\eps} \leq e^{-2} < 1/6$.

Finally, let $m = m' +\frac{4}{\eps}$, and notice that $m \leq c \cdot \frac{\ln k}{\eps}$ for some $c = c(r)$.
Apply again the union bound to obtain that the probability that $m$ random sets, chosen uniformly and independently from $\binom{[n]}{k}$, do not include a collection $\calR$ as above and a set in $\calF_1$ that is disjoint from at least one set of $\calR$ is at most $1/3$. Recalling that the sets of $\calR$ lie in $\calF_2$ and that $\calF_1,\calF_2 \subseteq \calF$, it follows that with probability at least $2/3$, the $m$ random sets include two disjoint sets that lie in $\calF$. This completes the proof.
\end{proof}

We are ready to prove Theorem~\ref{thm:general_r}.

\begin{proof}[ of Theorem~\ref{thm:general_r}]
For integers $n,k,m$ and for a real $\eps \in [0,1)$, consider the $\textsc{Canonical}$ $\textsc{Tester}~(n,k,m)$ for the $\Inter_{n,k,\eps}$ problem.
As previously mentioned, this tester is non-adaptive and one-sided error.
Fix an integer $r \geq 1$, let $c = c(r)$ be the constant given in Lemma~\ref{lemma:samples}, and suppose that
\begin{eqnarray}\label{eq:eps,m}
\eps \geq 2 \cdot 3^{r^2} \cdot \bigg (\frac{k^2}{n} \bigg )^r~~~\mbox{and}~~~m \geq c \cdot 3^{r^2} \cdot \frac{\ln k}{\eps}.
\end{eqnarray}
It suffices to prove that if a function $f: \binom{[n]}{k} \rightarrow \{0,1\}$ is $\eps$-far from intersecting, then the $m$ sets picked by our tester include with probability at least $2/3$ two disjoint sets that lie in $f^{-1}(1)$. Indeed, this implies that the tester rejects such an $f$ with the desired probability.

Let $f: \binom{[n]}{k} \rightarrow \{0,1\}$ be a function $\eps$-far from intersecting. Consider the family $\calF = f^{-1}(1)$, and note that $\calF$ is $\eps$-far from intersecting, hence the pair $(\calF,\calF)$ is $\eps$-far from cross-intersecting.
Apply Lemma~\ref{lemma:ABCinduction} with $\calF_1 = \calF_2 = \calF$ and with $t=r$, to obtain that there exist sets $A \subseteq [n]$ and $B, C \subseteq A$ with $B \cap C = \emptyset$, such that for $\tilde{\eps} = \frac{\eps}{3^{r^2}}$, it holds that
\begin{enumerate}
  \item the pair $(\calF(A_{\downarrow B}),\calF(A_{\downarrow C}))$ is $\tilde{\eps}$-far from cross-intersecting, and
  \item\label{itm:222} either there is no subset of $[n] \setminus A$ of size at most $r$ that $\tilde{\eps}$-captures $\calF_2(A_{\downarrow C})$, or $|C| \geq r$.
\end{enumerate}
Moreover, our assumption on $\eps$ in~\eqref{eq:eps,m} clearly implies that $\eps \geq 3^{r^2} \cdot (\frac{k}{n})^{r}$, hence it follows from the lemma that $|C|<r$. We thus derive from Item~\ref{itm:222} above that there is no subset of $[n] \setminus A$ of size at most $r$ that $\tilde{\eps}$-captures $\calF(A_{\downarrow C})$. This obviously implies that there is no subset of $[n] \setminus A$ of size at most $r-1$ that $\tilde{\eps}$-captures $\calF(A_{\downarrow C})$.

Now, by our assumptions in~\eqref{eq:eps,m}, it holds that $\tilde{\eps} \geq 2 \cdot (\frac{k^2}{n})^r$ and $m \geq c \cdot \frac{\ln k}{\tilde{\eps}}$. Therefore, we can apply Lemma~\ref{lemma:samples} with the family $\calF$, the sets $A,B,C$, the integer $r$, and the real $\tilde{\eps}$, to obtain that with probability at least $2/3$, the $m$ sets picked by our tester include two disjoint sets that lie in $\calF$, as required.

Finally, let $m$ be the smallest integer satisfying the condition in~\eqref{eq:eps,m}.
Observe that the running time of our tester is polynomial in $n$ and $m$, hence it is polynomial in $n$ and $1/\eps$.
By our assumption on $\eps$ in~\eqref{eq:eps,m}, the running time is polynomial in $n$, and the proof is completed.
\end{proof}

\subsection{The case \texorpdfstring{$n = \Theta(k)$}{n=Theta(k)}}\label{sec:n=alpha*k}

We turn our attention now to the $\Inter_{n,k,\eps}$ problem, where the integers $n$ and $k$ satisfy $n = \alpha \cdot k$ for an arbitrary constant $\alpha \geq 2$.
We first observe that for this range of parameters, the canonical tester is not effective, even for a constant $\eps$.
To see this, observe that two random sets, chosen uniformly and independently from $\binom{[n]}{k}$, are disjoint with probability $\binom{n-k}{k} / \binom{n}{k}$.
For $n = \alpha \cdot k$, with $\alpha \geq 2$ being a constant, this probability decreases exponentially in $n$. Therefore, a collection of random sets from $\binom{[n]}{k}$ is unlikely to include even a single pair of disjoint sets, unless their number grows exponentially in $n$. As a result, the canonical tester does not provide a useful upper bound on the query complexity of the $\Inter_{n,k,\eps}$ problem in this setting.

To overcome this difficulty, we consider a slightly different tester for the $\Inter_{n,k,\eps}$ problem, which picks random pairs of disjoint sets from $\binom{[n]}{k}$ and checks if at least one of the pairs demonstrates a violation of intersectingness for the tested function. This tester allows us to prove the following result.

\begin{theorem}\label{thm:n=alpha*k}
For all reals $\alpha \geq 2$ and $\eps \in (0,1)$, there exists some $c = c(\alpha,\eps)$, such that for all sufficiently large integers $n$ and $k$ with $n = \alpha \cdot k$, there exists a non-adaptive one-sided error tester for $\Inter_{n,k,\eps}$ with $c$ queries and success probability at least $2/3$.
\end{theorem}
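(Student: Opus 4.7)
The plan is to analyze the tester described just before the statement: for a constant $c = c(\alpha, \eps)$, sample $c$ ordered pairs $(G_1^{(i)}, G_2^{(i)})_{i \in [c]}$ uniformly and independently from the collection of ordered disjoint pairs in $\binom{[n]}{k} \times \binom{[n]}{k}$, query $f$ on the $2c$ resulting sets, and reject if and only if some sampled pair satisfies $f(G_1^{(i)}) = f(G_2^{(i)}) = 1$. The tester is manifestly non-adaptive and uses $2c = O(1)$ queries. If $\calF := f^{-1}(1)$ is intersecting then no ordered disjoint pair of $k$-sets lies entirely inside $\calF$, so the tester accepts with probability $1$; the error is therefore one-sided.

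The heart of the argument is to show that whenever $\calF$ is $\eps$-far from intersecting, a uniformly random ordered disjoint pair lies in $\calF \times \calF$ with probability at least some $\delta = \delta(\alpha, \eps) > 0$. This is where the Friedgut--Regev result~\cite{FR18} enters. For $n = \alpha k$ with $\alpha \geq 2$ constant, it yields a removal-type statement of the shape: if a family $\calG \subseteq \binom{[n]}{k}$ has fewer than $\delta \cdot \binom{n}{k} \binom{n-k}{k}$ ordered disjoint pairs, then $\calG$ is $\eps$-close to intersecting. Applied contrapositively to $\calF$, this produces at least $\delta \cdot \binom{n}{k} \binom{n-k}{k}$ ordered disjoint pairs contained in $\calF \times \calF$, and dividing by the total number $\binom{n}{k} \binom{n-k}{k}$ of ordered disjoint pairs yields the required lower bound on the per-sample success probability.

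With this constant-probability lower bound in hand, the completeness analysis reduces to a routine independent-sampling calculation: the probability that none of the $c$ sampled pairs witnesses non-intersectingness is at most $(1-\delta)^c \leq e^{-c \delta}$, which drops below $1/3$ for $c = \lceil 2/\delta \rceil$, a constant depending only on $\alpha$ and $\eps$. Therefore the tester rejects every $\eps$-far function with probability at least $2/3$, completing the proof modulo the density claim.

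The main obstacle is the correct invocation, and if necessary translation, of the Friedgut--Regev theorem. Their result is often phrased in the $\mu_p$ product-measure setting on $\{0,1\}^n$ or as a stability statement for the Erd{\H{o}}s--Ko--Rado theorem, rather than as a uniform-measure density claim about disjoint pairs in $\binom{[n]}{k}$. For $n = \alpha k$ with $\alpha$ fixed the relationship between these formulations is well understood, but one must carefully track how $\delta$ depends on $\alpha$ and $\eps$, and verify that the resulting statement applies to arbitrary $\eps$-far families, without any additional structural hypothesis. Once the appropriate removal-type consequence of~\cite{FR18} is in place, the sampling analysis above immediately completes the proof.
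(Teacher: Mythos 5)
Your proposal takes essentially the same route as the paper: sample constantly many uniformly random disjoint pairs from $\binom{[n]}{k}$, query them, reject if a violating pair appears, and use Friedgut--Regev~\cite{FR18} to lower-bound the per-sample success probability for an $\eps$-far family. The one genuine gap is at $\alpha = 2$. The Friedgut--Regev result as stated in the paper (Theorem~\ref{thm:FR}) requires $\alpha > 2$ strictly, so your claim that it ``yields a removal-type statement'' for all $\alpha \geq 2$ is not supported. The boundary case $n = 2k$ must be dispatched separately. Fortunately it is elementary there: the Kneser graph $K(2k,k)$ is a perfect matching, so each set has a unique disjoint partner, a family $\eps$-far from intersecting has more than $\eps \cdot \binom{n}{k}$ violating pairs out of $\tfrac{1}{2}\binom{n}{k}$ total, and a single random pair witnesses non-intersectingness with probability greater than $2\eps$; the usual repetition argument then gives a constant-query tester. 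You should add this case explicitly.

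You also correctly flag the second subtlety, namely that~\cite{FR18} does not directly take ``$\calF$ is $\eps$-far from intersecting'' as its hypothesis; the version used in the paper assumes instead that $|\calF \setminus \calJ| > \eps \cdot \binom{n}{k}$ for every intersecting $j$-junta $\calJ$. The bridge is the one-line observation that if some intersecting $j$-junta $\calJ$ had $|\calF \setminus \calJ| \leq \eps \cdot \binom{n}{k}$, then removing $\calF \setminus \calJ$ would leave an intersecting sub-family, contradicting $\eps$-farness. You anticipate this issue but do not spell out the resolution; doing so would make your ``removal-type'' reformulation a theorem rather than a plausible translation. With those two points filled in, the argument is complete and matches the paper's.
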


Theorem~\ref{thm:n=alpha*k} is obtained using the following result, that was proved (in a generalized form) by Friedgut and Regev~\cite{FR18}.
\begin{theorem}[\cite{FR18}]\label{thm:FR}
For all reals $\alpha > 2$ and $\eps \in (0,1)$, there exist a real $\delta = \delta(\alpha,\eps) > 0$ and an integer $j = j(\alpha,\eps)$, such that for all sufficiently large integers $n$ and $k$ with $n = \alpha \cdot k$, the following holds.
Suppose that $\calF \subseteq \binom{[n]}{k}$ is a family, such that for every intersecting $j$-junta $\calJ$ over $[n]$, it holds that $|\calF \setminus \calJ| > \eps \cdot \binom{n}{k}$.
Then, a random unordered pair $\{A,B\}$ of two disjoint sets of $\binom{[n]}{k}$, chosen uniformly from all such pairs, satisfies $A \in \calF$ and $B \in \calF$ with probability at least $\delta$.
\end{theorem}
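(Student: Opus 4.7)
The plan is to prove Theorem~\ref{thm:FR} by contradiction, combining spectral/Fourier analysis on the slice $\binom{[n]}{k}$ with a junta-approximation theorem. Suppose no pair $(\delta, j)$ works: then along a sequence of $(n,k)$ with $n = \alpha k$ there exist families $\calF \subseteq \binom{[n]}{k}$ that are $\eps$-far from every intersecting $j$-junta yet satisfy $p(\calF) \to 0$, where $p(\calF)$ denotes the probability that a uniformly random unordered disjoint pair $\{A,B\}$ satisfies $A, B \in \calF$. Taking $\calJ = \emptyset$ as a trivial intersecting $0$-junta forces $|\calF| > \eps \binom{n}{k}$, so the density $\mu = |\calF|/\binom{n}{k}$ is bounded below by $\eps$.

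The first step is to express $p(\calF)$ spectrally. Let $f = \mathbf{1}_\calF$ and let $T$ be the adjacency operator of the Kneser graph $K(n,k)$. Then up to an explicit normalization $p(\calF) = \langle f, Tf \rangle$. Since $T$ commutes with the natural $S_n$-action on $\binom{[n]}{k}$, it is diagonalized by the Johnson-scheme decomposition $f = \sum_{i=0}^{k} f^{=i}$ into $S_n$-isotypic components, with $i$-th eigenvalue $\lambda_i = (-1)^i \binom{n-k-i}{k-i}/\binom{n-k}{k}$. For $n = \alpha k$ with $\alpha > 2$, these eigenvalues decay in magnitude and alternate in sign, so the identity $p(\calF) = \sum_i \lambda_i \|f^{=i}\|_2^2$ implies that $p(\calF) \to 0$ forces a rigid cancellation pattern among the level weights $\|f^{=i}\|_2^2$.

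The second step, and the technical heart of the argument, invokes hypercontractive level-$d$ inequalities for the slice (the $(2,4)$-hypercontractivity of Lee--Yau, sharpened by Friedgut and collaborators). These state that for any Boolean $f$ of density $\mu$ bounded away from $0$ and for the Kneser operator in the regime $n = \alpha k$, if $\langle f, Tf \rangle$ is small enough then the Fourier mass of $f$ on levels $i > d$ must itself be negligible for some $d = d(\alpha, \eps)$, and the low levels are constrained. Applied here, $p(\calF) \to 0$ and $\mu \geq \eps$ together force $f$ to be $L^2$-close to a function of Fourier degree at most $d$.

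The third step converts this low-degree $L^2$-approximation into a genuine $j$-junta $L^1$-approximation via a Kindler--Safra-type junta theorem on the slice (as developed by Dinur--Friedgut and Friedgut--Regev), producing a $j$-junta $\calJ \subseteq \binom{[n]}{k}$ with $|\calF \triangle \calJ| \leq \tfrac{\eps}{2}\binom{n}{k}$ and $j = j(\alpha, \eps)$ independent of $n$. Finally, one verifies that $\calJ$ must be intersecting: otherwise, fixing two disjoint sets in $\calJ$ and using that $\calJ$ depends on only a bounded number of coordinates while $k \to \infty$, a positive $\alpha$-dependent fraction of random unordered disjoint pairs $\{A,B\}$ would lie in $\calJ$, and the $\eps/2$-approximation would transfer a positive fraction of these pairs into $\calF$, contradicting $p(\calF) \to 0$. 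Combined with $|\calF \triangle \calJ| \leq \tfrac{\eps}{2}\binom{n}{k}$, this contradicts the assumption that $\calF$ is $\eps$-far from every intersecting $j$-junta. The main obstacle is obtaining the junta approximation with parameters depending only on $\alpha$ and $\eps$ rather than on $n$, which is the technical contribution of~\cite{FR18} and relies crucially on the slice hypercontractive machinery in the regime $n = \alpha k$ with $\alpha > 2$.
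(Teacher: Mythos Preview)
The paper does not prove Theorem~\ref{thm:FR}; it is quoted from~\cite{FR18} and used as a black box in the proof of Theorem~\ref{thm:n=alpha*k}. So there is no ``paper's own proof'' to compare against---your proposal is an attempt to reconstruct the argument of Friedgut and Regev rather than to match anything in the present paper.

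As a sketch of the Friedgut--Regev argument, your outline is in the right spirit (spectral decomposition of the Kneser operator, slice hypercontractivity, and a Kindler--Safra-type junta theorem), but your second step is oversimplified in a way that hides the real difficulty. Since the Kneser eigenvalues $\lambda_i$ alternate in sign, smallness of $\langle f, Tf\rangle = \sum_i \lambda_i \|f^{=i}\|_2^2$ does \emph{not} by itself force the high-level mass to be negligible---one could in principle have large mass on consecutive levels that cancels. The actual mechanism in~\cite{FR18} is more delicate: one combines the Kneser quadratic form with hypercontractive control on the \emph{noise} operator (not the Kneser operator directly), and uses that a Boolean function with small disjoint-pair probability must have its Fourier weight concentrated on low degrees via a ``level-$d$'' inequality tailored to the slice. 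Your final paragraph correctly flags that making the parameters depend only on $(\alpha,\eps)$ is the genuine technical content of~\cite{FR18}, so you are aware that the sketch is not self-contained; but the specific implication you state in step~2 is not a correct formulation of the relevant hypercontractive input.
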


\begin{proof}[ of Theorem~\ref{thm:n=alpha*k}]
Fix $\alpha \geq 2$ and $\eps \in (0,1)$, and let $n$ and $k$ be sufficiently large integers with $n = \alpha \cdot k$.
For an integer $m$, consider the following tester for the $\Inter_{n,k,\eps}$ problem.
\begin{tcolorbox}
Input: Query access to a function $f: \binom{[n]}{k} \rightarrow \{0,1\}$.
\begin{enumerate}
  \item Pick $m$ unordered pairs $\{A_1, B_1\}, \ldots, \{A_m,B_m\}$ of sets of $\binom{[n]}{k}$ with $A_i \cap B_i = \emptyset$ for all $i \in [m]$ uniformly and independently at random.
  \item Query $f$ for the values of $f(A_i)$ and $f(B_i)$ for each $i \in [m]$.
  \item If for each $i \in [m]$, it holds that $f(A_i)=0$ or $f(B_i)=0$, then accept, and otherwise reject.
\end{enumerate}
\end{tcolorbox}

The above tester is clearly non-adaptive.
Since it accepts every intersecting function with probability $1$, its error is one-sided.
It suffices to show that there exists $m = m(\alpha,\eps)$, such that if a function $f: \binom{[n]}{k} \rightarrow \{0,1\}$ is $\eps$-far from intersecting, then our tester rejects it with probability at least $2/3$. Let $f: \binom{[n]}{k} \rightarrow \{0,1\}$ be a function $\eps$-far from intersecting.

Consider first the simple case of $\alpha = 2$. Here, the fact that $f$ is $\eps$-far from intersecting implies that more than $\eps \cdot \binom{n}{k}$ of the $\frac{1}{2} \cdot \binom{n}{k}$ unordered pairs of disjoint sets of $\binom{[n]}{k}$ violate the intersectingness. For $m = 1/\eps$, the probability that our tester accepts $f$ is at most $(1-2\eps)^{m} = (1-2\eps)^{1/\eps} \leq e^{-2}$. Hence, with the complement probability, which exceeds $2/3$, our tester rejects $f$, as desired.

Next, suppose that $\alpha>2$, let $\delta = \delta(\alpha,\eps) > 0$ and $j = j(\alpha,\eps)$ be the constants given in Theorem~\ref{thm:FR}, and put $\calF = f^{-1}(1)$. Since $\calF$ is $\eps$-far from intersecting, it follows that every intersecting $j$-junta $\calJ$ over $[n]$ satisfies $|\calF \setminus \calJ| > \eps \cdot \binom{n}{k}$.
By Theorem~\ref{thm:FR}, at least $\delta$ fraction of the unordered pairs of disjoint sets of $\binom{[n]}{k}$ violate the intersectingness.
Letting $m = 2/\delta$, the probability that our tester accepts $f$ is at most $(1-\delta)^m = (1-\delta)^{2/\delta} \leq e^{-2}$. As before, it follows that with probability at least $2/3$, our tester rejects $f$, and we are done.
\end{proof}

\section{Lower Bound}\label{sec:lower}

In this section, we prove our lower bound on the query complexity of the $\Inter_{n,k,\eps}$ problem and confirm Theorem~\ref{thm:LBIntro}.
As mentioned earlier, the result can also be derived from~\cite{Fischer24}.

Before turning to the proof, let us mention that for integers $n$ and $k$ with $n \geq 2k$, the Kneser graph $K(n,k)$ is the graph on the vertex set $\binom{[n]}{k}$, where two sets are adjacent if and only if they are disjoint.
We observe that if the number of vertices $\binom{n}{k}$ is even, $K(n,k)$ admits a perfect matching, and that otherwise, it admits a matching that misses a single vertex. Indeed, for $n=2k$, the collection of all edges of $K(n,k)$ forms a perfect matching, and for $(n,k)=(5,2)$, the graph $K(n,k)$ coincides with the Petersen graph that has a perfect matching as well. For all other admissible values of $n$ and $k$, a recent result of Merino, M{\"{u}}tze, and Namrata~\cite{MerinoMN23} asserts that $K(n,k)$ admits a Hamilton cycle, yielding the existence of the desired matching. Equipped with this observation, we state and prove the following result.

\begin{theorem}\label{thm:LB}
There exists a constant $c > 0$, such that for all integers $n$ and $k$ with $n \geq 2k$ and for any real $\eps = \eps(n,k)$ with $\binom{n}{k}^{-1} \leq \eps < \frac{1}{2}$, the query complexity of every (possibly adaptive, two-sided error) tester for $\Inter_{n,k,\eps}$ with success probability at least $2/3$ is at least $c \cdot \frac{1}{\eps}$.
\end{theorem}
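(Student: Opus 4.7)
The plan is to apply Yao's minimax principle by exhibiting two distributions $\calD_{\YES}$ and $\calD_{\NO}$ over input functions $f: \binom{[n]}{k} \to \{0,1\}$ such that any deterministic tester making $q = o(1/\eps)$ queries fails to distinguish between them with sufficient probability. Let $\calD_{\YES}$ be concentrated on the all-zero function, which is trivially intersecting. To define $\calD_{\NO}$, invoke the matching in $K(n,k)$ discussed just before the statement, i.e., a collection $M$ of pairwise vertex-disjoint pairs of disjoint $k$-subsets of $[n]$ with $|M| \geq \lfloor \binom{n}{k}/2 \rfloor$. Set $s = \lceil \eps \cdot \binom{n}{k} \rceil + 1$, and let $\calD_{\NO}$ sample a uniformly random subset $S \subseteq M$ of size $s$ and output the indicator of $\calF_S = \bigcup_{\{A,B\} \in S} \{A,B\}$. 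Since $\eps < \tfrac{1}{2}$ and $\eps \geq \binom{n}{k}^{-1}$, one easily checks that $s \leq |M|$, so the sampling is well defined.

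Next, verify that every sample from $\calD_{\NO}$ is $\eps$-far from intersecting. The pairs of $S$ form a matching of size $s$ in the disjoint-pair graph restricted to $\calF_S$, because the pairs of $M$ (and hence of $S$) are vertex-disjoint. Consequently, any intersecting subfamily of $\calF_S$ must avoid at least one endpoint of every pair in $S$, so making $\calF_S$ intersecting requires deleting at least $s > \eps \cdot \binom{n}{k}$ of its sets, as needed.

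Now fix any deterministic (possibly adaptive) $q$-query tester $T$. In the YES case, every query returns $0$, so $T$ follows a fixed root-to-leaf path in its decision tree, querying a deterministic sequence of sets $G_1, \ldots, G_q$. In the NO case, along this same branch, by a union bound
\begin{align*}
\Pr_S\bigl[\,\exists i \in [q]:\, G_i \in \calF_S\,\bigr] \;\leq\; \sum_{i=1}^{q} \Pr_S[\,G_i \in \calF_S\,] \;\leq\; q \cdot \frac{s}{|M|} \;\leq\; c' \cdot q \cdot \eps,
\end{align*}
for an absolute constant $c'>0$, where the middle inequality uses that each $G_i$ belongs to at most one pair of $M$, and that such a pair is selected into a uniformly random $s$-subset of $M$ with probability exactly $s/|M|$. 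Hence with probability at least $1 - c' q \eps$, the NO transcript agrees with the YES transcript and $T$ returns the same answer on both. If $T$ succeeds with probability at least $2/3$ on each distribution, then $c' q \eps \geq 1/3$, yielding $q \geq \Omega(1/\eps)$.

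The only delicate point is the adaptive case: one must use that along the all-zero branch of $T$'s decision tree the queried multiset $\{G_1, \ldots, G_q\}$ is a fixed collection independent of $S$, so the union bound is legitimate. The rest of the argument is routine bookkeeping of the parameter ranges to guarantee $s \leq |M|$ and $s/|M| \leq O(\eps)$.
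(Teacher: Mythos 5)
Your proof follows essentially the same strategy as the paper's: Yao's minimax principle with the YES distribution concentrated on the all-zero function and a NO distribution built by lighting up roughly $\eps\binom{n}{k}$ pairs from a (near-)perfect matching of the Kneser graph, then arguing that any $q$-query deterministic tester, run along the all-zero branch of its decision tree, is unlikely to ever probe a lit set. Your way of handling adaptivity---observing that the YES branch determines a fixed query sequence, and bounding the probability the NO input ever disagrees with that transcript---is a cleaner phrasing than the paper's construction of an auxiliary tester $T''$, but it is the same idea.

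There is one genuine slip in the parameter bookkeeping. You set $s = \lceil\eps\binom{n}{k}\rceil + 1$ and assert that $\eps < \tfrac12$ together with $\eps\geq\binom{n}{k}^{-1}$ ``easily'' gives $s\leq|M|=\lfloor\binom{n}{k}/2\rfloor$. This is false for small domains: take $\binom{n}{k}=6$ and $\eps=0.4$, so $\lceil 2.4\rceil+1=4$ while $\lfloor 6/2\rfloor=3$. The paper instead takes $N$ to be the \emph{smallest integer strictly greater than} $\eps\binom{n}{k}$, i.e., $N=\lfloor\eps\binom{n}{k}\rfloor+1$; when $\eps\binom{n}{k}$ is non-integral this is $\lceil\eps\binom{n}{k}\rceil$, one less than your $s$, and then $N\leq\binom{n}{k}/2$ is immediate from $\eps<\tfrac12$. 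Your extra ``$+1$'' inside the ceiling is what breaks the bound. This is easily repaired by adopting the paper's choice of $s$, after which the rest of your argument (union bound of $q\cdot s/|M|\leq O(q\eps)$ over a fixed query sequence, and the $\tfrac13$ gap between error probabilities) goes through.
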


\begin{proof}
Let $n$ and $k$ be integers with $n \geq 2k$, and suppose that $\binom{n}{k}^{-1} \leq \eps < \frac{1}{2}$.
We assume throughout the proof that the Binomial coefficient $\binom{n}{k}$ is even. A similar argument can be applied when it is odd.
Under this assumption, the discussion that precedes the statement of Theorem~\ref{thm:LB} yields that there exists a partition $\calP$ of $\binom{[n]}{k}$ into pairs of disjoint sets. Note that $|\calP| = \frac{1}{2} \cdot \binom{n}{k}$.

We define a distribution $\calD_{no}$ over the Boolean functions on $\binom{[n]}{k}$ as follows.
Let $N$ denote the smallest integer satisfying $N> \eps \cdot \binom{n}{k}$, and notice that the assumption $\eps < \frac{1}{2}$ implies that $N \leq |\calP|$.
To draw a function $f: \binom{[n]}{k} \rightarrow \{0,1\}$ from $\calD_{no}$, we pick uniformly at random a collection $\calP' \subseteq \calP$ of size $N$, and for each set $A \in \binom{[n]}{k}$, we define $f(A)=1$ if $A$ lies in the pairs of $\calP'$, and $f(A)=0$ otherwise.
Observe that any function $f$ in the support of $\calD_{no}$ is $\eps$-far from intersecting, because to make it intersecting, one has to change its value on at least one set from each of the $N > \eps \cdot \binom{n}{k}$ pairs of $\calP'$, and those pairs are pairwise disjoint.
Note that the fact that $\calP$ forms a partition of $\binom{[n]}{k}$ implies that a function $f$ drawn from $\calD_{no}$ satisfies for every set $A \in \binom{[n]}{k}$ that
\begin{eqnarray}\label{eq:Prob_f(A)=1}
\Prob{}{f(A)=1} =  \frac{N}{|\calP|} \leq \frac{\eps \cdot \binom{n}{k}+1}{\frac{1}{2} \cdot \binom{n}{k}} = 2\eps + \frac{2}{\binom{n}{k}} \leq 4 \eps,
\end{eqnarray}
where the second inequality follows from the assumption $\eps \geq \binom{n}{k}^{-1}$.

Now, suppose that there exists a (two-sided error) tester $T$ for $\Inter_{n,k,\eps}$ with success probability at least $2/3$ and query complexity $q$.
Consider the distribution $\calD$ over the Boolean functions on $\binom{[n]}{k}$, which with probability $1/2$ returns the constant zero function and with probability $1/2$ returns a function drawn from $\calD_{no}$.
By Yao's minimax principle~\cite{Yao77}, there exists a choice for the random coins of $T$ that results in a deterministic tester $T'$ with query complexity at most $q$, whose success probability on a random function drawn from $\calD$ is at least $2/3$.
Let $A_1, \ldots, A_{q_0} \in \binom{[n]}{k}$ denote the queries of $T'$ while running on the constant zero function, and note that $q_0 \leq q$.
We define a tester $T''$ that queries an input function on the sets $A_1, \ldots, A_{q_0}$, and then, if at least one of the answers is $1$ rejects the function, and otherwise, acts according to $T'$.
Recalling that any nonzero function in the support of $\calD$ is $\eps$-far from intersecting, it follows that the success probability of $T''$ on a random function drawn from $\calD$ is at least $2/3$.
Since the latter probability exceeds $1/2$, we derive that $T''$ accepts the intersecting constant zero function, and consequently, any other function $f$ satisfying $f(A_i)=0$ for all $i \in [q_0]$.

Finally, to obtain a lower bound on $q$, let us examine how the tester $T''$ acts on a function drawn from $\calD_{no}$.
Combining~\eqref{eq:Prob_f(A)=1} with the union bound yields that a function $f$ drawn from $\calD_{no}$ satisfies $f(A_i)=1$ for some $i \in [q_0]$ with probability at most $q_0 \cdot 4\eps$.
Therefore, $T''$ accepts a function drawn from $\calD_{no}$ with probability at least $1-q_0 \cdot 4 \eps$.
Since the functions of the support of $\calD_{no}$ are $\eps$-far from intersecting, it follows that $T''$ errs on such a function with probability at least $1-q_0 \cdot 4 \eps$.
On the other hand, since $T''$  errs on a function drawn from $\calD$ with probability at most $1/3$, it errs on a function drawn from $\calD_{no}$ with probability at most $\frac{1/3}{1/2} = 2/3$.
This implies that $1 - q_0 \cdot 4 \eps \leq 2/3$, which yields that $q \geq q_0 \geq c/\eps$ for a constant $c>0$, as desired.
\end{proof}

\begin{remark}
We note that, for all integers $n$ and $k$ with $n \geq 2k$, the condition $\eps < \frac{1}{2}$ given in Theorem~\ref{thm:LB} makes the $\Inter_{n,k,\eps}$ problem non-trivial, in the sense that there exists a family of sets from $\binom{[n]}{k}$ that is $\eps$-far from intersecting.
Indeed, the Erd{\H{o}}s--Ko--Rado theorem~\cite{EKR61} implies that to make the family $\binom{[n]}{k}$ of all $k$-subsets of $[n]$ intersecting, one has to remove at least $(1-\frac{k}{n}) \cdot \binom{n}{k} \geq \frac{1}{2} \cdot \binom{n}{k} > \eps \cdot \binom{n}{k}$ of its sets.
\end{remark}

\section*{Acknowledgments}
We would like to thank the anonymous reviewers for their useful comments.

\bibliographystyle{abbrv}
\bibliography{intersecting}

\end{document}